\newcommand{\CGP} {\ensuremath{\mathcal{A}}}
\newcommand{\NP} {NP}
\newcommand{\IC} {IC}
\newcommand{\NIC} {NIC}
\newcommand{\RAC} {RAC}
\newcommand{\Edge}[2]{\{#1,#2\}}
\begin{document}
\title{Recognizing \IC{}-Planar and  \NIC{}-Planar Graphs
\thanks{Supported by the Deutsche Forschungsgemeinschaft (DFG), grant
    Br835/18-1.}
}

\author{Franz J.\ Brandenburg}
\institute{%
University of Passau,
94030 Passau, Germany \\
\email{brandenb@fim.uni-passau.de}
}

\maketitle

\begin{abstract}
We prove that triangulated \IC{}-planar and \NIC{}-planar graphs can
be recognized in cubic time.
 A graph is 1-planar if it  can be drawn in the plane
 with at most one crossing per edge. A drawing is \IC{}-planar  if,
 in addition,
 each vertex is incident to at most one crossing edge  and \NIC{}-planar
  if two pairs of crossing edges share at
most one vertex. In a triangulated drawing  each face is a triangle.
In consequence, planar-maximal and maximal  \IC{}-planar and
\NIC{}-planar graphs can be recognized in $O(n^5)$ time and maximum
and optimal ones in $O(n^3)$ time. In contrast,
 recognizing 3-connected \IC{}-planar and \NIC{}-planar
graphs is \NP-complete, even if the graphs are given with a rotation
system which describes the cyclic ordering of the edges at each
vertex. Our results complement similar ones for 1-planar graphs.
\end{abstract}

\section{Introduction}
\label{sect:intro}

Graphs are commonly drawn in the plane  so that the vertices are
mapped to distinct points and the edges to Jordan curves connecting
the endpoints. A drawing is used to visualize structural
relationships that are modeled by vertices and edges and thereby
make them easier comprehensible to a human user. Specifications of
nice drawings of graphs and algorithms for their constructions are
the topic of Graph Drawing  \cite{dett-gdavg-99, kw-dg-01,
t-handbook-GD}.

There are several classes of graphs that are defined by specific
restrictions of edge crossings in graph drawings.  Edge crossings
are negatively correlated to nice, and therefore, they should be
avoided or controlled in some way.
 The  planar graphs are the
best known and most prominent example. Planarity excludes  crossings
 and is one of the most basic and influential concepts in Graph
Theory. Many properties of planar graphs have been explored,
including duality, minors, and drawings \cite{d-gt-00}, as well as
linear-time algorithms for the recognition and the construction of
straight-line grid drawings \cite{fpp-hdpgg-90, s-epgg-90}. However,
graphs from applications in engineering, social science, and life
science are generally not planar. This observation has motivated
approaches towards \emph{beyond-planar} graphs, which allow
crossings of edges with restrictions.
 A prominent example is 1-planar graphs, which were
introduced by Ringel \cite{ringel-65} in an approach to color a
planar graph and its dual simultaneously. A graph is \emph{1-planar}
if it can be drawn in the plane so that each edge is crossed at most
once. 1-planar graphs have found recent interest as emphasized by
Liotta's survey \cite{l-beyond-14}.
 A 1-planar graph
 of size $n$ has at most $4n-8$ edges \cite{ringel-65} and $K_6$ is the maximum complete
1-planar graph. 1-planar graphs do not admit straight-line drawings
\cite{t-rdg-88}, whereas 3-connected 1-planar graphs can be drawn
straight-line on a grid of quadratic size with the exception of a
single edge in the outer face \cite{abk-sld3c-13}.   Moreover,
1-planar graphs do not admit right angle crossing drawings
\cite{el-racg1p-13}, and conversely, there are right angle crossing
(\RAC{}) graphs  that are not 1-planar. In other words,
 the classes
of 1-planar  and \RAC{}   graphs  are incomparable. The recognition
problem of 1-planar graphs is \NP-complete \cite{GB-AGEFCE-07,
km-mo1ih-13}. It remains \NP-complete, even for graphs of bounded
bandwidth, pathwidth, or treewidth \cite{bce-pc1p-13}, if an edge is
added to a planar graph \cite{cm-aoepl-13}, and if the graphs are
3-connected and are given with a rotation system which describes the
cyclic ordering of the neighbors at each vertex \cite{abgr-1prs-15}.
On the other hand, 1-planar graphs can be recognized in cubic time
if they are triangulated \cite{cgp-rh4mg-06}  and even in linear
time if they are optimal and have $4n-8$ edges \cite{b-ro1plt-16}.

1-planar graphs can also be defined in terms of maps
\cite{cgp-pmg-98, cgp-mg-02, cgp-rh4mg-06, t-mgpt-98}. Maps
generalize the concept of planar duality. A \emph{map} $M$ is a
partition of the sphere into finitely many regions. Each region is
homeomorphic to a closed disk and the interior of two regions is
disjoint.   Some regions are labeled as \emph{countries} and the
remaining regions are lakes or \emph{holes}. In the plane, we use
the region of one country as outer face, which is unbounded and
encloses all other regions. An \emph{adjacency} is defined by a
touching of countries.  There is a \emph{strong} adjacency between
two countries if their boundaries intersect in a segment and a
\emph{weak} adjacency if the boundaries intersect only in a point.
There is a $k$-point if $k$ countries meet at a point. A map $M$
defines a graph $G$ so that the countries of $M$ are in one-to-one
correspondence with the vertices of $G$ and there is an edge $\{u,
v\}$ if and only if the countries of $u$ and $v$ are adjacent. Then
$G$ is called a \emph{map graph} and $M$ is the map of $G$. Note
that  holes are discarded for the definition of map graphs.
Obviously, a $k$-point induces $K_k$ as a subgraph. If no more than
$k$ countries meet at a point, then $M$ is a $k$-map and $G$ is a
$k$-\emph{map graph}.   If there are no holes then $M$ is
hole-free. A graph is  a \emph{hole-free 4-map graph} if it is the
map graph of a hole-free $4$-map \cite{cgp-pmg-98, cgp-mg-02,
cgp-rh4mg-06}.

Chen et al~\cite{cgp-mg-02, cgp-rh4mg-06} stated that the
triangulated 1-planar graphs are exactly the 3-connected hole-free
4-map graphs. Their fundamental result is a cubic-time recognition
algorithm for 3-connected hole-free 4-map graphs. They also observed
that the recognition problem of hole-free 4-map graphs can be
reduced in linear time to the special case of 3-connected graphs.
Hence, the recognition problem of 1-planar graphs is   solvable in
cubic time if the graphs are triangulated. We  extend the algorithm
to triangulated 1-planar graphs with (near) independent crossings.

A graph is \IC{}-planar (independent crossing planar)
\cite{a-cnircn-08,bdeklm-IC-16,ks-cpgIC-10,zl-spgic-13} if it has a
1-planar drawing   in which each vertex is incident to at most one
crossing edge  and is \NIC{}-planar (near independent crossing
planar) \cite{bbhnr-NIC-16,z-dcmgprc-14} if two pairs of crossing
edges share at most one vertex. If each pair of crossing edges is
augmented to the complete graph $K_4$, which is drawn as
 a kite as
in Fig.~\ref{fig:kite}, then a 1-planar drawing   is \IC-planar{} if
each vertex is
 part of at most one kite  and it is  \NIC{}-planar if each edge  is
 part of at most one kite.
It is known that \IC{}-planar graphs have at most $13/4 \, n -6$
edges \cite{ks-cpgIC-10} and are 5-colorable \cite{zl-spgic-13}. The
recognition problem is \NP-hard, even for 3-connected graphs with a
given rotation system \cite{bdeklm-IC-16}. \IC{}-planar graphs admit
straight-line drawings on a  grid of quadratic size and   right
angle crossing drawings, which, however,  may  need exponential area
\cite{bdeklm-IC-16}. Hence, every \IC-planar graph is a \RAC{}
graph. \NIC{}-planar graphs have at most $18/5 \, (n-2)$ edges
\cite{bbhnr-NIC-16, z-dcmgprc-14} and an \NP-complete recognition
problem. They admit straight-line drawings, but not necessarily with
right angle crossings. In fact, there are \NIC{}-planar graphs that
are not \RAC{} graphs, and vice-versa \cite{bbhnr-NIC-16}. Hence,
the classes of \NIC{}-planar graphs  and of \RAC{} graphs are
incomparable.
Outer 1-planar  graphs are another important subclass of 1-planar
 graphs
 that admit a 1-planar drawing with all vertices   in the outer
face \cite{e-rdg-86}. Outer 1-planar graphs are planar
\cite{abbghnr-o1p-15} and can be recognized  in linear time
\cite{abbghnr-o1p-15, heklss-ltao1p-15}.

A drawn graph defines an embedding which is an equivalence class of
drawings and consists of faces whose boundary consists of edges or
half-edges between a vertex and a crossing point of two edges. There
are several ways to augment 1-planar embeddings and graphs. Ringel
\cite{ringel-65} observed that each pair of crossing edges of a
1-planar embedding can be augmented to a complete graph $K_4$ that
is embedded as a kite. This fact has been rediscovered in many
works. A 1-planar embedding of  a graph $G$ is \emph{plane-maximal}
if no planar edge can be added to $G$ without violating 1-planarity
or introducing multiple edges. However, the introduction of multiple
edges may be useful at a separation pair. If there are two vertices
$s$ and $t$ so that  $G - \{s,t\} $ decomposes into components
$H_1,\ldots, H_r$, then the augmented components $H_i + \{s,t\}$ are
treated separately for a recognition \cite{cgp-rh4mg-06} or a
drawing \cite{b-vr1pg-14}. Altogether, there are $r-1$ copies of the
edge between the separation pair $\{s,t\}$.
 An embedding is \emph{triangulated} if each face is a
triangle and is bounded by three (half-)edges (between a vertex and
a crossing point). Clearly, a 1-planar embedding without separation
pairs is triangulated if and only if it is plane-maximal. If
multiple edges are added at a separation pair as described above,
then triangulated and plane-maximal coincide on 1-planar embeddings.

 A 1-planar graph $G$  is   triangulated (plane-maximal) if
it admits a  triangulated (plane-maximal) 1-planar embedding. It is
\emph{planar-maximal} if no  edge $e$ can be added to $G$ so that
$G+e$ admits a 1-planar embedding in which $e$ is planar. Finally,
$G$
 is \emph{maximal}  if $G+e$ is not  1-planar,   \emph{maximum} or \emph{densest} if
$G+e$ violates the upper bound of the number of edges of 1-planar
graphs and \emph{optimal} if the number of edges exactly meets the
upper bound of $4n-8$. Hence, a graph in a graph class $\mathcal{G}$
is maximal if there is no supergraph in $\mathcal{G}$ with the same
set of vertices and a proper superset of edges, and maximum if there
is no graph in $\mathcal{G}$ of the same size and with more edges.
Similar notions apply to planar, \IC{}-planar, and \NIC{}-planar
graphs. Clearly, these concepts coincide for planar graphs whereas
they differ for \IC{}-planar, \NIC{}-planar and 1-planar graphs.
First, note the difference between plane-maximal and planar-maximal
graphs. As an example, remove an edge from the complete graph on
five vertices and consider $K_5-e$ which is a maximal planar graph.
Every planar embedding of $K_5-e$ is plane-maximal 1-planar.
However, the removed edge $e$ can be added and drawn planar if a
$K_4$ subgraph of $K_5-e$ is drawn with a pair of crossing edges.
Hence, $K_5-e$ is plane-maximal 1-planar and not planar-maximal
1-planar. Similarly, every triangulated planar graph of size at
least five is plane-maximal and not planar-maximal (or maximal)
1-planar.
 Bodendiek et al. \cite{bsw-bs-83} showed
that densest 1-planar graphs have $4n-8$ edges and that such graphs,
called optimal, exist for $n=8$ and all $n \geq 10$
\cite{bsw-1og-84}. The upper bound was rediscovered in many works.
Bodendiek et al. also observed that there are maximal 1-planar
graphs that are not optimal. The gap in the number of edges of
maximal 1-planar is quite large, as shown by Brandenburg et al.
\cite{begghr-odm1p-13}, who found sparse maximal 1-planar graphs
with  $45/17 \, n - 84/17$ edges. Similarly, there are  sparse
maximal \IC{}-planar graphs with $3n-5$ edges and sparse maximal
\NIC{}-planar graphs with $16/5 \, (n-2)$ edges, and both bounds are
tight \cite{bbhnr-NIC-16}. There are optimal \IC{}-planar graphs
only for  $n=4k$  and optimal \NIC{}-planar graphs only for $n=5t+2$
and such graphs exist for all $k \geq 2$ \cite{zl-spgic-13} and all
$t \geq 2$ \cite{bbhnr-NIC-16}. Maximum \IC{}-planar graphs with
$\lfloor 13/4 \,n-6 \rfloor$ edges exist for all $n \geq 5$ and
there are maximum \NIC{}-planar graphs with $\lfloor 18/5(n-2)
\rfloor$ edges for $n=5t+i$ and $i=2,3$ \cite{bbhnr-NIC-16}. Hence,
the sequence  of restrictions from triangulated to optimal is proper
for 1-planar, \IC{}-planar, and \NIC{}-planar graphs.

Finally, note that triangulated  \IC-planar (\NIC{}-planar)
embeddings do not admit separation pairs so that the embeddings are
in normal form with a kite at each pair  of crossing edges
\cite{abk-sld3c-13} and planar triangles for the other faces. There
is a planar generalized dual graph if each kite (and also each
planar tetrahedron) is represented by a special node
\cite{bbhnr-NIC-16}.

In this work we extend the cubic-time algorithm of Chen et
al.~\cite{cgp-rh4mg-06} for the recognition of   triangulated
1-planar graphs to triangulated \IC{}-planar and \NIC{}-planar
graphs. We call the  algorithms \CGP{}, $\mathcal{B}_{IC}$ and
$\mathcal{B}_{NIC}$, respectively. Our algorithms are presented as a
program and consist of three parts. They compute an edge coloring
and a boolean formula which is used to test \IC{}- and
\NIC{}-planarity.

The paper is organized as follows. Section \ref{sect:basics}
describes basic definitions. In Section \ref{sect:recog} we present
our algorithm and we show how to solve \IC{}- and \NIC{}-planarity
in Section \ref{sect:IC+NIC}.
 We conclude in Section \ref{sect:conclusion} with some
open problems.

\section{Preliminaries}
\label{sect:basics}
We consider undirected graphs $G = (V, E)$  and   assume that the
graphs are simple and 2-connected, unless otherwise stated. The
subgraph induced by a subset $U$ of vertices is denoted by $G[U]$.
For convenience, we omit braces and write $G[u_1, \ldots, u_r]$ if
$U = \{u_1, \ldots, u_r\}$. The subgraph of $G$ induced by the
vertices of subgraphs $H$ and $K$ is denoted by $H+K$, and similarly
for $G-H$, except if $H$ is an edge, which is removed from $G-H$
whereas the endvertices remain.

A  \emph{drawing} of $G$ maps  the vertices  to distinct points in
the plane and each edge $\{u,v\}$   to a Jordan arc connecting the
points of $u$ and $v$.  Two edges \emph{cross} if their Jordan arcs
intersect. A planar drawing excludes edge crossings and a 1-planar
drawing admits at most one crossing per edge. A crossings subdivides
an edge into two \emph{half-edges}. An  \emph{embedding}
$\mathcal{E}(G)$ is an equivalence class of drawings and specifies
edge crossings and faces.
 The planarization of an embedding $\mathcal{E}(G)$ is an
embedded planar graph which is obtained by taking each crossing
point  as a new vertex and half-edges as new edges.

A  planar embedding   partitions the plane (or the sphere) into
\emph{faces} or \emph{regions},  called a map  \cite{cgp-mg-02,
cgp-rh4mg-06}.  Two faces are \emph{adjacent} if their boundaries
intersect. The intersection is a segment or just a common point. A
\emph{hole-free map graph} is defined by a one-to-one correspondence
between faces and vertices and between adjacencies and edges. There
is a $k$-map graph if at most $k$ regions meet at a point of a map.

\begin{figure}
  \centering
  \subfigure[]{
    \includegraphics[scale=0.30]{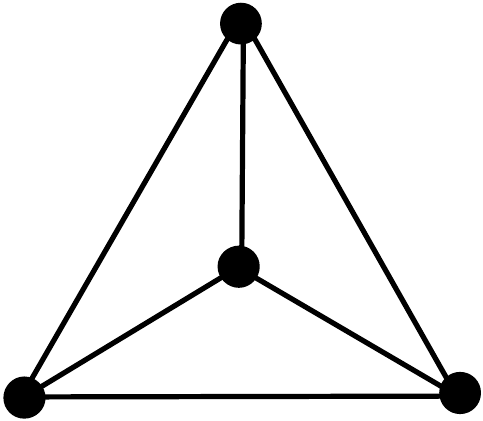}
    \label{fig:tetrahedron}
  }
  \hfil
  \subfigure[]{
      \includegraphics[scale=0.27]{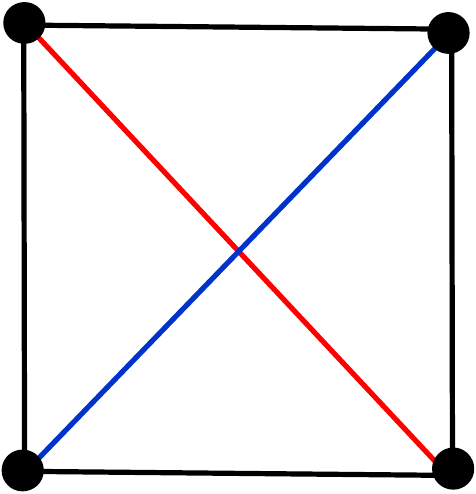}
      \label{fig:kite}
  }
  \caption{Drawings of $K_4$  \subref{fig:tetrahedron} planar as a
    tetrahedron and \subref{fig:kite} with a crossing as a kite.}
  \label{fig:k4-embeddings}
\end{figure}

The complete graph on four vertices $K_4$ plays a crucial role in
1-planar, \IC{}-planar and \NIC{}-planar  graphs. It admits two
embeddings \cite{Kyncl-09}, as a \emph{tetrahedron} or as a
\emph{kite} with a pair of crossing edges, see
Fig.~\ref{fig:k4-embeddings}. The embedding as a tetrahedron is not
necessarily planar. A planar edge can be \emph{covered} by a kite so
that an edge   is a crossing edge of a kite, see
Figs.~\ref{fig:separatingtriple} and \ref{fig:separatingtriangle}.
The cubic-time recognition algorithm for hole-free 4-map graphs of
Chen et al. \cite{cgp-rh4mg-06} searches all $K_4$ subgraphs
$\kappa$ of the given graph and checks whether $\kappa$ must be
embedded as a kite or as a tetrahedron. This can be determined to a
large extend, but it is not unique, as a $K_5$ illustrates. The
complete graph $K_5$ has five embeddings (up to graph automorphism)
\cite{Kyncl-09} as displayed in Fig.~\ref{fig:allK5}, but only one
of them is 1-planar. If the outer face is fixed, then there are
three 1-planar embeddings with one of the outer edges in a kite, see
Fig.~\ref{fig:K5fixedface}.


\begin{figure}
   \centering
   \subfigure[ ]{
     \includegraphics[scale=0.4]{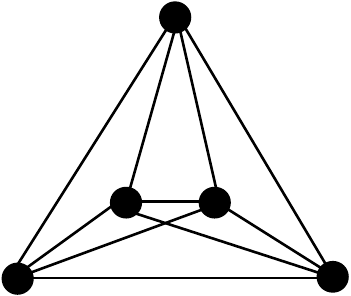}
   }
   \quad
      \subfigure[ ]{
     \includegraphics[scale=0.4]{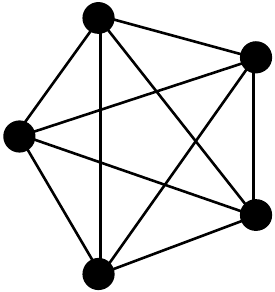}
   }
   \quad
      \subfigure[ ]{
     \includegraphics[scale=0.4]{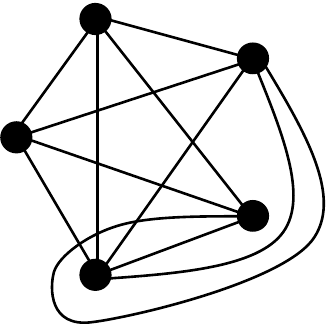}
   }
   \quad
      \subfigure[ ]{
     \includegraphics[scale=0.4]{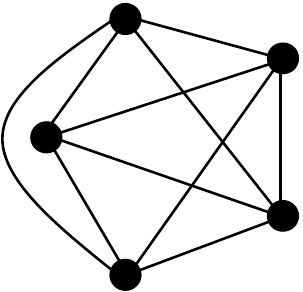}
   }
\quad
   \subfigure[ ]{
     \includegraphics[scale=0.4]{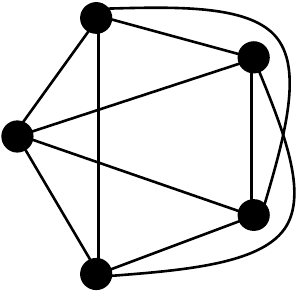}
   }
   \caption{All non-isomorphic embeddings of $K_5$}
   \label{fig:allK5}
\end{figure}

\begin{figure}
   \centering
   \subfigure[ ]{
     \includegraphics[scale=0.27]{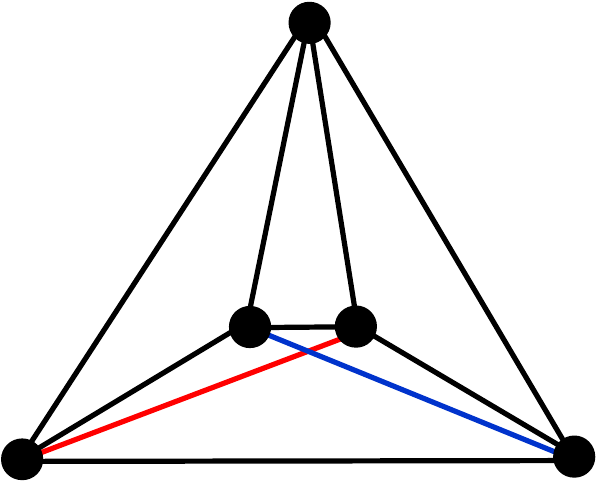}
   }
   \quad
      \subfigure[ ]{
     \includegraphics[scale=0.27]{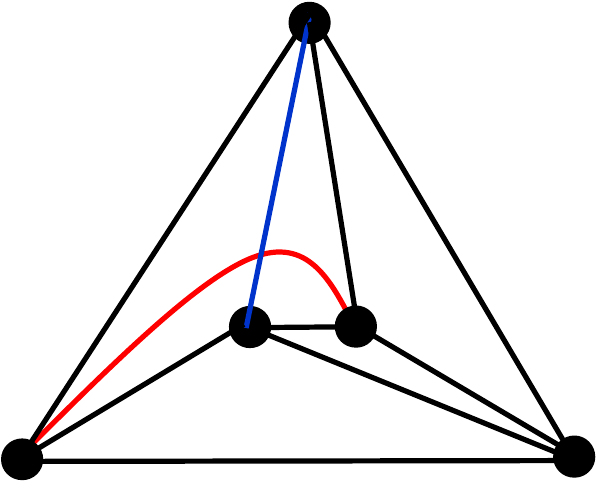}
   }
\quad
   \subfigure[ ]{
     \includegraphics[scale=0.27]{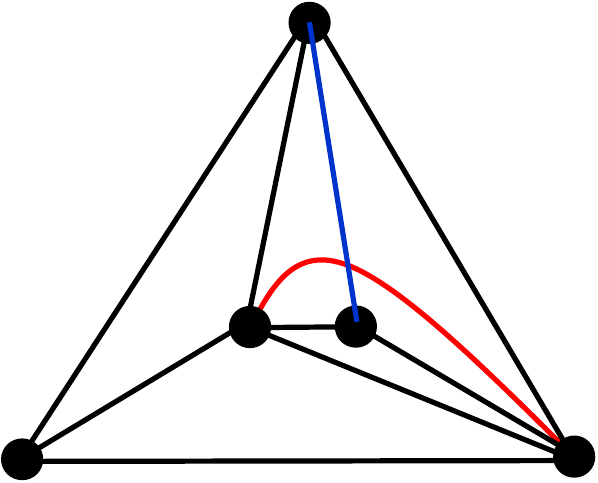}
   }
   \caption{Three embeddings of $K_5$ with a fixed outer face. Each kite includes the edge
   between the inner vertices and one of the outer edges.}
   \label{fig:K5fixedface}
\end{figure}


\section{Recognition} \label{sect:recog}

For the recognition of triangulated \IC{}-planar and \NIC{}-planar
graphs we extend algorithm \CGP{} of Chen et al.
\cite{cgp-rh4mg-06}. Recall that the 3-connected hole-free 4-map
graphs are exactly the triangulated 1-planar graphs. Our algorithm
$\mathcal{B}$ extends
 \CGP{} by an edge coloring and a boolean
formula. Algorithm \CGP{} marks an
edge if it is planar at the actual stage of the algorithm. A marked
edge could have been crossed at an earlier stage, in which case it
is crossed in the computed 1-planar embedding. This divergence is
due to the fact that  algorithm \CGP{} removes one crossing edge if it detects a pair of
crossing edges. The remaining
crossed edge is marked and is treated as planar. Our edge coloring records each
decision and  tells whether an edge is planar or crossed in
every triangulated 1-planar embedding,
  or whether this is uncertain and depends on a
particular embedding. The uncertainty is expressed by a boolean
formula such that there is a one-to-one correspondence between
feasible embeddings and truth assignments. An embedding is feasible
if it is triangulated and \IC{}- and \NIC{}-planar, respectively. We
prove the following result:

\begin{theorem}
There is a cubic-time algorithm   that checks whether a graph $G$ is
a triangulated 1-planar graph. It returns an edge coloring of $G$,
from which one obtains a partial embedding of a spanning subgraph of
$G$,
 and a boolean formula $\eta$ such that the
\IC{}-extension (\NIC{}-extension) $\eta^+$ of $\eta$ is satisfiable
if and only if the embedding  of $G$ is \IC{}-planar
(\NIC{}-planar). Otherwise, the algorithm returns \textsf{false}
and stops with a failure.
\end{theorem}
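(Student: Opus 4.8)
The plan is to use the cubic-time algorithm \CGP{} of Chen et al.~\cite{cgp-rh4mg-06} for the decision part and to \emph{instrument} it so that it additionally emits the promised coloring and formula. Recall that \CGP{} accepts $G$ exactly when $G$ is a $3$-connected hole-free $4$-map graph, equivalently a triangulated $1$-planar graph, by examining the $K_4$ subgraphs and deciding for each whether it is realized as a tetrahedron or as a kite. First I would run \CGP{}; if it rejects, then $G$ is not triangulated $1$-planar and the algorithm returns \textsf{false}. This already fixes the cubic running time, since everything added below stays within the same bound.

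The core of the construction is to make the kite/tetrahedron decisions of \CGP{} explicit and to separate the \emph{forced} ones from the \emph{free} ones. For every $K_4$ subgraph $\kappa$ I would determine, from the local structure exposed by \CGP{}, whether $\kappa$ must be a kite in every triangulated $1$-planar embedding, must be a tetrahedron in every such embedding, or whether both are possible; in the last case I introduce a boolean variable. The edge coloring then records, for each edge, whether it is planar in every embedding, crossed in every embedding, or its status is governed by one of these variables. The recurring configuration in which a planar edge may be covered by a kite (the separating triple and separating triangle of Figs.~\ref{fig:separatingtriple} and~\ref{fig:separatingtriangle}, and the ambiguity illustrated by the $K_5$ with a fixed outer face in Fig.~\ref{fig:K5fixedface}) is exactly what produces the free choices. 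From the colored graph one reads off a partial embedding of the spanning subgraph consisting of the always-planar and always-crossed edges, while the uncertain edges are left to be resolved by a truth assignment.

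Next I would assemble $\eta$ by writing, as clauses over the free variables, the local consistency constraints that a global triangulated $1$-planar embedding must satisfy: incompatible local choices may not be made simultaneously, and each forced decision must be respected. The central claim --- and the step I expect to be the main obstacle --- is that the satisfying assignments of $\eta$ correspond bijectively to the triangulated $1$-planar embeddings of $G$. Proving this requires a careful structural analysis showing that the free choices are genuinely independent once the clauses of $\eta$ are imposed, i.e.\ that no long-range interaction between distant kites is missed and that every combination of locally compatible choices extends to a consistent embedding. This hinges on the normal-form description of triangulated embeddings (a kite at every crossing pair, planar triangles elsewhere, and no separation pairs) recalled in Section~\ref{sect:basics}.

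Finally, to obtain $\eta^+$ I would augment $\eta$ with the defining restriction of the target class. For the \IC{}-extension I add, for every vertex, an at-most-one constraint over the kite variables of the $K_4$'s incident to that vertex, expressing that each vertex lies in at most one kite; for the \NIC{}-extension the same is done per edge, expressing that each edge lies in at most one kite. By the bijection established for $\eta$, a satisfying assignment of $\eta^+$ yields a triangulated $1$-planar embedding whose kites obey the \IC{} (respectively \NIC{}) condition, and conversely; hence $\eta^+$ is satisfiable if and only if the embedding of $G$ is \IC{}-planar (\NIC{}-planar). The number of variables and clauses is polynomial in $n$ and the whole instrumentation runs within the cubic bound of \CGP{}, so the overall algorithm is cubic; how to decide satisfiability of $\eta^+$ is deferred to Section~\ref{sect:IC+NIC}.
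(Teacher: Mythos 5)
You propose exactly the architecture the paper uses: run \CGP{} and instrument it with an edge coloring and a CNF formula $\eta$ whose clauses $\alpha(\kappa)$ record candidate kites, then obtain $\eta^+$ by adding at-most-one constraints per vertex (\IC{}) or per edge (\NIC{}). But the claim you yourself single out as ``the main obstacle'' --- that each $K_4$ can be classified as forced-kite, forced-tetrahedron, or free, and that the free choices are independent and always extend to a global triangulated embedding --- is left entirely unproven, and it does not follow from a generic ``careful structural analysis'': it is exactly what the paper's lemmas supply, using the specific priority order in which \CGP{} processes its gadgets. Because separating 3-cycles are handled before separating edges, which precede separating 4-cycles, triples, quadruples, and triangles, the crossing decisions made at each such gadget hold in \emph{every} triangulated 1-planar embedding (Lemma~\ref{propertiesA}); the $MC_5$ step is vacuous (Lemma~\ref{lem:no-MC5}); and only three $MC_4$ configurations remain (Lemma~\ref{lem:MC4}). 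Consequently the uncertainty is confined to exactly two places: the choice of which edge of $\mathcal{C}[a,b]$ crosses a separating edge $\{a,b\}$, encoded by the disjunction $\sigma(a,b,\mathcal{C}[a,b])$, and small graphs of size at most eight produced by separating 3-/4-cycles, whose feasible embeddings are enumerated exhaustively in final-check and encoded by formulas over variables of the outer face only. Independence of distant choices is then automatic from the recursive decomposition, since the edges of a separating cycle are planar at detection time and the formulas of $G_{in}+C$ and $G_{out}+C$ are simply conjoined by \textbf{merge}.

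Two further points. First, your static per-$K_4$ classification misdescribes the algorithm's dynamics: \CGP{} makes progress by \emph{removing} one edge of each detected crossing pair, so an edge treated as planar at a later stage may in fact be crossed in the computed embedding; the paper's six-color scheme (black, red, blue, orange, cyan, grey) exists precisely to record this history, and your proposal contains no substitute for it --- ``planar in every embedding / crossed in every embedding / variable-governed'' conflates the blue/cyan edges of a partition cycle with genuinely black ones. Second, you claim a bijection between satisfying assignments of $\eta$ and triangulated 1-planar embeddings, which is stronger than needed and stronger than what is proved: Theorem~\ref{thm:charICplanar} establishes only the satisfiability equivalence, in one direction by undoing the actions of $\mathcal{B}$ to build an embedding whose kites are those selected by the assignment, and in the other by setting $x_{\kappa}=\textsf{true}$ for each kite of a given \IC{}- (\NIC{}-)planar embedding, which satisfies $\eta^+$ because each vertex (edge) lies in at most one kite. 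So your skeleton matches the paper's, but the load-bearing step is missing, and it cannot be filled without the gadget-order invariance argument.
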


Algorithm $\mathcal{B}$ is the program of algorithm \CGP{} of
\cite{cgp-rh4mg-06} with a minor simplification. Algorithms
$\mathcal{B}_{IC}$ and  $\mathcal{B}_{NIC}$ specialize $\mathcal{B}$
to  \IC{}-planar and \NIC{}-planar graphs, respectively. They stop
immediately if there is a violation of \IC{}- or \NIC{}-planarity.
 In each step the algorithms  add a clause to a
CNF formula $\eta$. The boolean formulas for \IC{}-planar and
\NIC{}-planar graphs have the same structure, however, the boolean
variables and the evaluation are different. A boolean variable  is
associated with a vertex for \IC{}-planar graphs and with an edge
for \NIC{}-planar graphs. For every $K_4$ subgraph $\kappa$ of the
input graph $G$, the clause $\alpha(\kappa) = (a_{\kappa} \wedge
b_{\kappa} \wedge c_{\kappa} \wedge d_{\kappa})$ expresses that
 $\kappa$ is embedded as a kite with a pair of crossing edges and the boolean variable $x_{\kappa}$
is assigned the value \textsf{true}. Here, $x \in \{a,b,c,d\}$ is
a vertex of $\kappa$ in the \IC{}-planar case, and $x$ is a
planar edge of $\kappa$ in the \NIC{}-planar case.
Feasibility is granted by \IC{}- and \NIC{}-extensions of the form
$(\neg x_{\kappa} \vee \neg x_{\kappa'})$  for every vertex (edge)  $x$ and
kites $\kappa$ and $\kappa'$ that may  include $x$.

Algorithm \CGP{}  systematically checks all $K_4$ subgraphs $\kappa$
of the given input graph $G$. It checks  gadgets in a fixed order
and tries to determine whether $\kappa$ must be embedded as a kite
or as a tetrahedron, and so do algorithms $\mathcal{B}$,
$\mathcal{B}_{IC}$  and $\mathcal{B}_{NIC}$. In most cases there is
an unambiguous decision. However, a separating edge only tells that
there is a kite, but it does not fix its position. Another ambiguity
comes from small graphs which  result from a partition by a
separating 3-cycle or 4-cycle. Input graphs of size at most eight
are checked by inspection. For example, the complete graph  $K_5$
with a fixed outer face has three embeddings with kites $\kappa_1$,
$\kappa_2$, and $\kappa_3$,
  as illustrated in Fig.~\ref{fig:K5fixedface}, and the
clause $ \alpha(\kappa_1)  \vee \alpha(\kappa_2) \vee
\alpha(\kappa_3)$ expresses the three options.  For an efficient
evaluation the clause is simplified.

The specialization  concerns the steps of \CGP{} with a separating
triangle, $MC_5$ and $MC_4$. These gadgets are described below.  A
vertex is incident to two kites at a separating triangle, which
violates \IC{}-planarity. The $MC_5$ step searches $K_5$ and is
vacuous, even for 1-planar graphs, as proved in Lemma
\ref{lem:no-MC5}, and most subcases of the $MC_4$ step violate
\IC{}- and \NIC-planarity. Then algorithms $\mathcal{B}_{IC}$  and
$\mathcal{B}_{NIC}$ return \textsf{false} and   stop. In addition,
the set of all \IC{}-planar (\NIC{}-planar) embeddings of graphs of
size at most eight is computed and expressed by a simplified boolean
formula with variables for the vertices or edges of the outer face.

Next, we explain the edge coloring  and the boolean formula.

\begin{definition}
  An edge of a 1-planar graph $G$ is colored \emph{black} if it is
  a planar edge in every triangulated 1-planar embedding $\mathcal{E}({G})$. Two edges
  $e$ and $f$ are colored \emph{red} and \emph{blue}, respectively, if $e$
  and $f$ cross in every triangulated 1-planar embedding. In edge is colored
  \emph{orange} if it is crossed and the candidates for a crossing are
  colored \emph{cyan}. Finally, \emph{grey} edges are unclear.

  For an uncolored edge $\{a,b\}$  let $\mathcal{C}[a, b]$ be the set of
  uncolored edges $\{x,y\}$ so that the induced subgraph $G[a, b, x, y]$
  is a $K_4$. Such edges are called \emph{crossable edges} in
  \emph{\cite{cgp-rh4mg-06}}.
\end{definition}

The black, red, blue, and orange edges are decided, whereas a cyan
and a grey edge may be planar in one 1-planar embedding and crossed
in another. Grey edges appear only in small subgraphs, such as
$K_5$ in Fig.~\ref{fig:K5fixedface}. A partial coloring $\chi$ on a
subset of edges of $G$ is extended step by step such that some
uncolored edges are colored. Here, blue and cyan overrule black or
grey and blue and cyan edges keep their color. There is an error,
otherwise, e.g., if a black edge shall be colored blue or red.

\begin{definition}
  A \emph{coloring} $\gamma$ of a set of edges $F \subseteq E$ extends a
  partial edge coloring $\chi$ of $G$ if colored edges keep their color
  and an uncolored edge  $e \in F$ takes the color of $\gamma$. If $\gamma$
  and $\chi$ disagree,  then
  $\gamma(e)$ = ``black'' or ``grey'' and  $\chi(e)$ = ``blue'' or
  ``cyan''. Otherwise, there is a conflict between $\gamma$
  and $\chi$, which is reported as a failure.
\end{definition}

The boolean formula $\eta$ is build up step by step as a conjunction
of clauses during a run of our algorithms. For every $K_4$ subgraph
$\kappa = G[ a,b,c,d]$, which may be embedded as a kite, let
$\alpha(\kappa) = (x^1_{\kappa}  \wedge x^2_{\kappa} \wedge
x^3_{\kappa}  \wedge x^4_{\kappa})$. In case of \IC{}-planarity,
$x^1, \ldots, x^4$ are the vertices of $\kappa$, and they are the
planar edges of $\kappa$ in case of \NIC{}-planarity. Each vertex
(edge) may be part of at most one kite such that each variable
$x_{\kappa}$ is \textsf{true} for at most one kite $\kappa$. This is
expressed by the \IC{}- and \NIC{}-extension, respectively.\\

Let's recall algorithm  \CGP{} on triangulated 1-planar graphs which
are 3-connected hole-free map graphs. For the formal properties of
each step, its computation, and the correctness proof we refer to
\cite{cgp-rh4mg-06}. Algorithm \CGP{} ``makes progress'' (i) by a
separation and (ii) by a crossing removal.

There are separating $3$-cycles and separating $4$-cycles in $G$.
Each such cycle $C$ partitions $G - C$ into an inner and an outer
component $G_{in}$ and $G_{out}$. At the time of the separation, the
edges of $C$ are planar and the subgraphs $G_{in}+C$ and $G_{out}+C$
can be treated separately \cite{cgp-rh4mg-06}. If $C$ is a 4-cycle,
then a chord $f$ must be added to the subgraphs for a triangulation,
and $f$ must be chosen properly, such that it is new for the
remaining subgraph. The chord is removed if the subgraphs are merged
later on. Some edges of $C$ were kite-covered before and were
crossed by other edges that were removed in an earlier step of the
algorithm. Then some edges of $C$ are colored blue or cyan, whereas
 \CGP{}   treats  them as black edges.

Algorithm \CGP{} recursively searches for gadgets, namely,
separating 3-cycles, separating edges, separating 4-cycles,
separating triples, separating quadruples, separating triangles,
$MC_5$  and $MC_4$, in this order.  The gadgets are described below.
Hence, if \CGP{} considers a separating edge, then there are no
separating 3-cycles and the graph under consideration is
4-connected. There is neither of the other gadgets if   $MC_4$   is
applied. The search for the gadgets in the given order simplifies
the case analysis and implies that decisions hold for all
triangulated 1-planar embeddings.

In each case, algorithm \CGP{} finds edges that are crossed in a
1-planar embedding or finds  edges that can be treated as planar at this stage.
One edge from a pair of crossing edges is removed to make progress
towards planarity. If \CGP{} does not fail, then it terminates at a
triangulated planar graph or at a small graph of size at most eight.

The decisions of \CGP{} do not uniquely determine a 1-planar
embedding. For example, if there is a separating edge $e$,   then
$e$ has the choice among several crossable edges. The case resembles
a graph decomposition at a separation pair. Also,  $K_5$ has three
embeddings if there is a planar outer 3-cycle. A final ambiguity
comes from each   pair  of crossing edges where \CGP{} removes one
of them to make progress. The choice has an effect on the ongoing
computation process. One may aim at using separating 3- and 4-cycles
in the next step by removing the edges of $\mathcal{C}[a, b]$
whereas the removal of  $\{a,b\}$ aims at 4-connected planar graphs.
The choice does not affect the ``yes'' or ``no'' decision on a
triangulated 1-planar graph.

\IC{}-  and  \NIC{}-planarity need  more information on all 1-planar
embeddings of the given graph which is provided by an edge coloring
and a boolean formula.

\begin{definition}
Let $G$ be a 4-connected graph.
  \begin{enumerate}
  \item A \emph{separating edge}   is an uncolored
      edge $\{a,b\}$ such that $G-\{a,b\}- \mathcal{C}[a,b]$ is
    disconnected, see Fig. \ref{fig:separatingedge1}.
  \item A \emph{separating 4-cycle} $C = (a,b,c,d)$ is a 4-cycle such that
    $G-C$ is disconnected.
  \item A \emph{separating triple} is a 3-cycle $C =(a,b,c)$ such that $G-
    C- \mathcal{C}[a,b]$ is disconnected, see Fig.
    \ref{fig:separatingtriple}.
  \item A \emph{separating quadruple} $C = (a,b,c,d)$ is a 4-cycle such that
    $G-C-\mathcal{C}[a,b]$ is disconnected.
  \item A \emph{separating triangle} is a 3-cycle $C=(a,b,c)$ such that $G-
    C- \mathcal{C}[a,b]-\mathcal{C}[b,c]$ is disconnected, see
    Fig. \ref{fig:separatingtriangle}.
  \end{enumerate}
\end{definition}

\begin{figure}
   \begin{center}
     \includegraphics[scale=0.4]{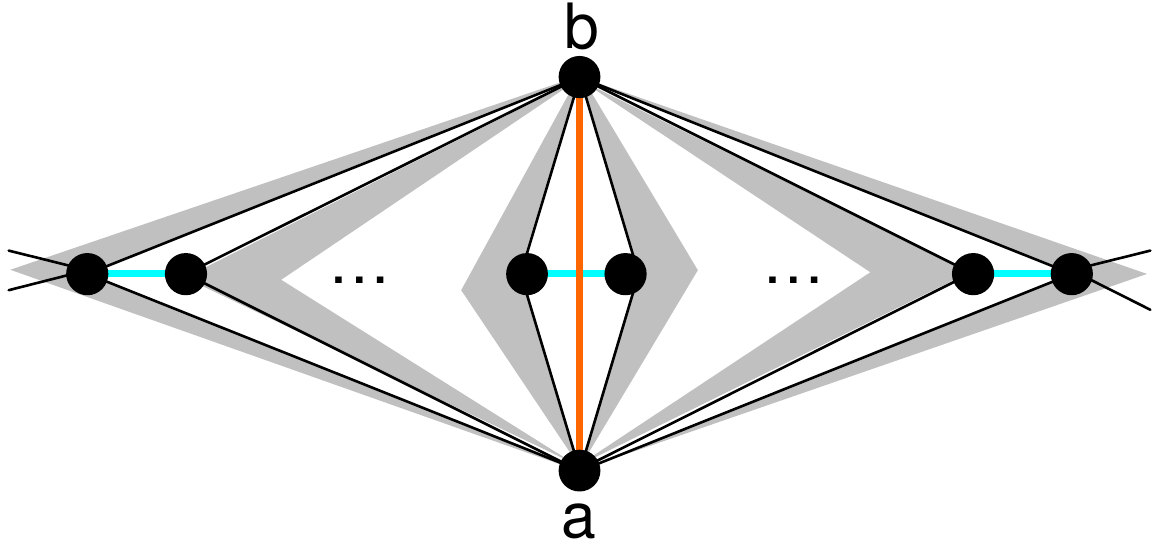}
     \caption{A separating edge $\{a,b\}$; the shaded areas represent
     4-connected
     subgraphs. The orange edge $\{a,b\}$ must cross one of the cyan
     ones.
     \label{fig:separatingedge1}}
   \end{center}
\end{figure}

\begin{figure}
  \centering
  \subfigure[] {
     \includegraphics[scale=0.40]{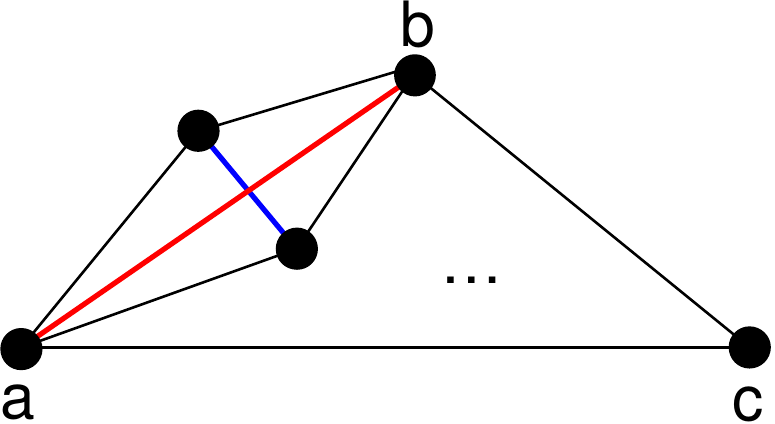}
    \label{fig:separatingtriple}
  }
  \hfil
  \subfigure[] {
      \includegraphics[scale=0.4]{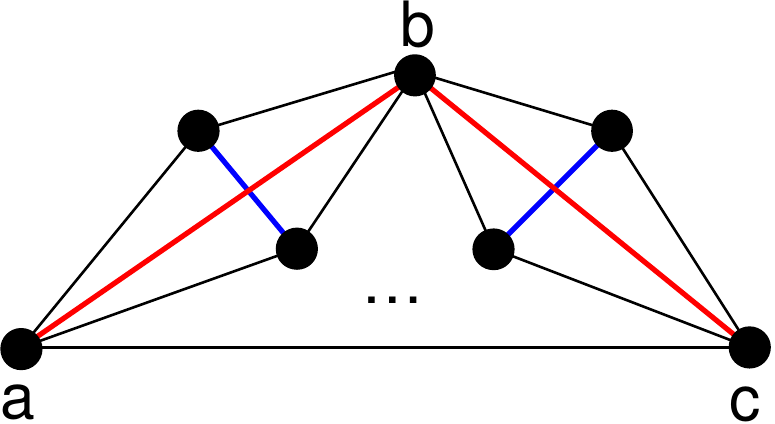}
      \label{fig:separatingtriangle}
  }
  \caption{A separating triple with a kite-covered edge $\{a,b\}$ and a separating
    triangle with two kite-covered edges. The dots represent a subgraph. }
  \label{fig:separating}
\end{figure}

We use the following properties of algorithm \CGP{}.

\begin{lemma} \label{propertiesA}
  Let $G$ be a triangulated 1-planar graph with $|G| > 8$.
 \begin{enumerate}
 \item The edges of separating 3-cycles and 4-cycles are planar at
 the time of their detection.
 \item If $\{a,b\}$ is a  separating edge, then $\{a,b\}$ is crossed
in every  1-planar embedding and the edges $\{a,x\}$ and $\{b,x\}$
are planar for each vertex $x$ with a crossable edge $\{x,y\} \in
\mathcal{C}[a,b]$.
 \item Edge $\{a, b\}$ of a  separating triple
  (quadruple) is crossed in every
   1-planar embedding, and similarly  edges $\{a, b\}$ and
   $\{b, c\}$ of a separating triangle.
 \item If $\{u,v\}$ is a   crossable edge of a separating triple,
 separating quadruple, and separating triangle with edge $\{a,
 b\}$, respectively, then $\{u,v\}$ is crossed in every 1-planar
 embedding whereas the edges $\{a,u\}, \{a,v\}, \{b,u\},
 \{b,v\}$ are planar in every 1-planar embedding.
 \end{enumerate}
\end{lemma}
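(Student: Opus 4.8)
The plan is to fix once and for all an arbitrary triangulated 1-planar embedding $\mathcal{E}(G)$ and to argue inside its normal form, in which every crossing is completed to a kite and every remaining face is a planar triangle (as noted in the excerpt, such a normal form exists because triangulated embeddings have no separation pairs). The one structural fact I would isolate first is a \emph{crossing-partner principle}: in normal form an edge $\{a,b\}$ is crossed if and only if it is a diagonal of a kite, and a kite forces $G[a,b,x,y]\cong K_4$ with the crossed edges being the two diagonals $\{a,b\}$ and $\{x,y\}$ and the four sides $\{a,x\},\{x,b\},\{b,y\},\{y,a\}$ planar. Hence the only edges that can ever cross $\{a,b\}$ are the $\{x,y\}$ with $G[a,b,x,y]\cong K_4$, i.e.\ (up to the colouring bookkeeping) the crossable edges in $\mathcal{C}[a,b]$. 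With this principle each item becomes a statement about which forced kites must occur in \emph{every} embedding, and I would obtain the ``some embedding'' side from \cite{cgp-rh4mg-06}, which already shows that a gadget detected in the prescribed order realises the advertised local configuration. The hypothesis $|G|>8$ is used only to guarantee that we are past the base cases handled by inspection, so that the gadget ordering applies.

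For item~1 I would use that separating $3$- and $4$-cycles are searched first, so at detection no crossing has yet been resolved. Were an edge of such a cycle $C$ a kite diagonal, its forced partner edge together with the cycle edges would expose either a vertex cut of size at most three or a higher-priority separator (a separating triple or triangle), contradicting both the $3$-connectivity of $G$ and the order in which \CGP{} processes gadgets; hence the edges of $C$ are planar at detection. Here ``planar at detection'' refers to the current working graph, which is consistent with the permanent colours because \CGP{} treats the already-fixed blue and cyan edges as planar. Item~2 is the prototype (Fig.~\ref{fig:separatingedge1}): at the separating-edge step the current graph is $4$-connected. Assuming $\{a,b\}$ planar in $\mathcal{E}$, I would use the triangulation around $\{a,b\}$ to route a path between the two shaded subgraphs that avoids every crossable edge, contradicting that deleting $\{a,b\}\cup\mathcal{C}[a,b]$ disconnects $G$; $4$-connectivity is exactly what forbids the alternative of a small vertex cut. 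Thus $\{a,b\}$ is crossed in every embedding, and by the crossing-partner principle it crosses some crossable edge $\{x,y\}$, so the kite sides $\{a,x\},\{b,x\}$ (and $\{a,y\},\{b,y\}$) are planar, which is the asserted planarity for every $x$ carrying a crossable edge.

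Items~3 and~4 follow the same template, strengthened by the extra deletions in the gadget definitions (Figs.~\ref{fig:separatingtriple} and~\ref{fig:separatingtriangle}). For the separating triple, quadruple, and triangle, the set $\mathcal{C}[a,b]$ (and $\mathcal{C}[b,c]$) is removed \emph{together with} $\{a,b\}$ (and $\{b,c\}$), so a hypothetically planar $\{a,b\}$ would let the separated subgraphs reconnect only through crossable partners that have already been deleted, a contradiction; hence $\{a,b\}$ (and $\{b,c\}$) is crossed in every embedding. For item~4 I would argue that the separation condition pins down the partner: the crossable edge $\{u,v\}$ cannot sit in a tetrahedral $K_4$ without reconnecting the two sides, so $\{u,v\}$ must serve as the crossing partner of the forced crossed edge $\{a,b\}$. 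Then $\{a,b\}$ and $\{u,v\}$ are the two diagonals of the kite on $\{a,b,u,v\}$, and its four sides $\{a,u\},\{a,v\},\{b,u\},\{b,v\}$ are planar in every embedding. A separate small check is that the gadget leaves essentially one admissible partner, so the per-crossable-edge phrasing of the statement is vacuously or uniquely satisfied.

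The main obstacle, as I see it, is the universal quantifier over embeddings rather than any single gadget analysis: \cite{cgp-rh4mg-06} needs each configuration only in \emph{one} suitable embedding, whereas the edge colouring demands it in \emph{every} triangulated 1-planar embedding. The delicate point is the planarity half of items~2 and~4, where a priori one of the four kite sides, say $\{a,x\}$, could itself be the diagonal of a different kite; ruling this out uses the crossing-partner principle (an induced $K_4$ contributes at most one crossing) together with the absence of higher-priority gadgets at the current step. I also expect the topological routing argument for item~1 and item~2 — turning the picture of the separating Jordan curve into an explicit small vertex cut that $4$-connectivity forbids — to require the most care, and this is the place where I would lean most heavily on the detailed invariants of \CGP{} from \cite{cgp-rh4mg-06}.
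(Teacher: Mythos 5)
Your overall framing is close in spirit to the paper's proof, which is essentially an assembly of citations to \cite{cgp-rh4mg-06}: item~1 is Lemma~3.5 there (the edges of a separating 3-/4-cycle are marked when the graph is split), item~2 is Lemma~7.2 together with the make-progress step (``remove a correct pizza'', Definition~5.1), and items~3 and~4 are handled ``similarly''. Your crossing-partner principle is correct for triangulated embeddings: a face at a crossing point is a triangle bounded by two half-edges and one full edge (half-edges join a vertex to a crossing point, so a side between two vertices must be a whole edge), hence all four kite sides exist and are drawn planar. Your route to items~3 and~4 --- the edge $\{a,b\}$ is forced to be crossed, its partner must lie in the (here unique) candidate set, and the realized kite makes $\{a,u\},\{a,v\},\{b,u\},\{b,v\}$ planar --- is legitimate and more explicit than the paper's citation, and you correctly isolate the universal quantifier over embeddings as the point where this lemma asks for more than Chen et al.'s existence statements.

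Two of your arguments, however, have genuine gaps. First, item~1: a vertex cut of size three does not contradict 3-connectivity --- a separating 3-cycle \emph{is} such a cut --- and separating triples and triangles have \emph{lower} priority than separating 3-/4-cycles (they are searched later), so neither of your claimed contradictions is available. Also ``at detection no crossing has yet been resolved'' is false: separating 4-cycles are searched after separating edges, the algorithm recurses, and the paper explicitly notes that edges of a detected cycle $C$ may be blue or cyan, i.e., crossed in the original embedding with their partners already removed. Item~1 is thus a recursion invariant of \CGP{} (earlier steps deleted the partners of kite-covered cycle edges and marked those edges; Lemma~3.5 of \cite{cgp-rh4mg-06}), not a per-embedding fact about $G$, so a fixed-embedding contradiction argument is the wrong shape for it. Second, the planarity half of item~2 for \emph{every} fan vertex $x$ --- not just the realized crossing partner --- is the real content, and the principle you invoke cannot close it: ``an induced $K_4$ contributes at most one crossing'' only forbids crossings among the six edges of a realized kite; it does not exclude $\{a,x_j\}$ being crossed by an edge outside $\{a,b,x_j,y_j\}$ when $\{a,b\}$ in fact crosses a different candidate $\{x_i,y_i\}$. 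You flag this as delicate but ultimately defer it to ``the detailed invariants of \CGP{}'', which is exactly where the paper's proof also lands (Lemma~7.2 and Definition~5.1 of \cite{cgp-rh4mg-06}); as an independent argument, that step is missing.
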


\begin{proof}
Algorithm \CGP{}  partitions $G$ into an inner and an outer
component at a separating 3- or 4-cycle $C$ and marks the edges of
$C$ (Lemma 3.5 of \cite{cgp-rh4mg-06}). A marked edge is treated as
planar and is not considered for separating edges, triples,
quadruples, or triangles. Accordingly, if \CGP{} encounters a
separating edge $\{a,b\}$ and there is a crossable edge $\{x,y\}$
such that $x$ and $y$ belong to different connected components, then
the subgraph $G[a,b,x,y]$ can be embedded as a kite so that edge
$\{a,b\}$ crosses $\{x,y\}$ (Lemma 7.2 of \cite{cgp-rh4mg-06}). Edge
$\{a,b\}$  is crossed by one of the crossable edges of
$\mathcal{C}[a,b]$. The make progress step (remove a correct pizza
in Definition 5.1 of \cite{cgp-rh4mg-06}) removes $\{a,b\}$ or
$\{x,y\}$ and marks the edges $\{a,x\}$ and $\{b,x\}$ for all
vertices $x$ with a crossable edge $\{x,y\} \in \mathcal{C}[a,b]$.
Similarly, algorithm \CGP{}   proceeds for separating triples,
separating quadruples, and separating triangles.
%
\end{proof}

The search for maximal complete subgraphs of size five and  $MC_4$
 complete   algorithm \CGP{}.  However,  as stated before, $K_5$
subgraphs have a unique 1-planar embedding which is detected at an
earlier stage.

\begin{lemma} \label{lem:no-MC5}
 The $MC_5$ step of algorithm \CGP{} is  vacuous if $G$ is a
 triangulated 1-planar graph.
 \end{lemma}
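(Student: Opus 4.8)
The plan is to understand what the $MC_5$ step does and then argue that the configuration it looks for cannot survive to that late stage of the algorithm. The key observation is the ordering of gadget searches: algorithm \CGP{} processes separating $3$-cycles, separating edges, separating $4$-cycles, separating triples, separating quadruples, separating triangles, and only then reaches $MC_5$. So when $MC_5$ is invoked, the graph under consideration contains none of the earlier gadgets; in particular it is $4$-connected and free of separating triples and triangles. I would first recall precisely which subgraph $MC_5$ is meant to detect — a maximal complete subgraph on five vertices, i.e.\ a $K_5$ that is not contained in any larger complete subgraph — and what embedding decision it is supposed to make.

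Next I would invoke the structural fact already established in the excerpt: $K_5$ has a unique $1$-planar embedding (up to the choice of outer face, the three options of Fig.~\ref{fig:K5fixedface}, which all share the same crossing structure). In a triangulated $1$-planar embedding, every pair of crossing edges is augmented to a kite (Ringel's observation), and a $K_5$ sits inside such a drawing with one determined crossing pair. The heart of the argument is to show that in any triangulated $1$-planar graph with $|G|>8$, a $K_5$ subgraph is always surrounded by a configuration that one of the \emph{earlier} gadgets — most plausibly a separating triangle or a separating triple — already detects and resolves. Concretely, I would take a $K_5$ on vertices $\{a,b,c,d,e\}$, fix its $1$-planar embedding so that one edge, say $\{d,e\}$, is crossed (covered by a kite on the outer triangle), and examine the outer $3$-cycle together with its crossable edges. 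The goal is to exhibit that $G$ minus this $3$-cycle minus the relevant sets $\mathcal{C}[\cdot,\cdot]$ is already disconnected, so the $K_5$ was consumed as a separating triangle (or triple) before $MC_5$ could act.

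The main obstacle will be the bookkeeping of \emph{maximality}: the $MC_5$ step looks specifically for a \emph{maximal} $K_5$, so I must rule out that the five vertices extend to a $K_6$ or that additional adjacencies outside the $K_5$ prevent the earlier separation from triggering. I would handle this by using Lemma~\ref{propertiesA} to pin down which edges of the $K_5$ are forced to be planar and which to be crossed: the crossed edge of the $K_5$ is covered by a kite, its endpoints have their incident kite edges marked, and the three planar outer edges form a triangle whose interior and exterior are separated in $G$ precisely because $|G|>8$ forces vertices on both sides. The delicate point is verifying that no crossable edge leaks across this triangle in a way that keeps $G$ connected after deletion — i.e.\ confirming the triangle is genuinely a separating triangle (or triple) in the sense of the Definition. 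Once that is in place, the earlier step has already fixed the embedding of the $K_5$ and marked its edges, so by the time \CGP{} reaches $MC_5$ there is no undetermined $K_5$ left, and the step performs no action.

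I would close by noting the degenerate small cases: graphs of size at most eight are dispatched by inspection and are excluded here by the hypothesis $|G|>8$, so the argument above covers every case the $MC_5$ step could encounter. Hence the $MC_5$ step is vacuous on triangulated $1$-planar graphs.
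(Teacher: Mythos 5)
Your proposal takes essentially the same route as the paper's proof: both rest on the unique 1-planar embedding of $K_5$ (a kite plus an apex vertex, with the kite's outer edges planar and the apex edges planar or kite-covered) together with the fixed priority ordering of the gadget searches, concluding that a separating 3-cycle, separating triple, or separating triangle is detected and consumed before the $MC_5$ step could ever fire. The extra verification you flag as delicate (that the outer triangle genuinely separates, plus the maximality bookkeeping) is left equally implicit in the paper's terse argument, so there is no substantive divergence between the two.
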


\begin{proof}
There are five embeddings of $K_5$  \cite{Kyncl-09}  and only the
one in Fig.~\ref{fig:allK5}(a) is 1-planar. The embedding consist of
a kite and a top vertex $t$ (called crust in \cite{cgp-rh4mg-06}).
The edges incident with $t$ can be planar or are kite-covered,
whereas the outer edges of the kite are planar. Hence, there  is   a
separating 3-cycle,
 triple, or triangle and \CGP{}  takes these gadgets
with higher priority than $MC_5$.
%
\end{proof}

In consequence, we must consider embeddings of $K_5$ only as part of
a small subgraph $H$ of size at most eight that  is obtained
by a partition of a separating 3- or 4-cycle. The outer edges of $H$
are treated as planar although they may be colored black, blue or
cyan. For example, suppose there is a separating triangle $C =
\{a,b,c\}$ and $\{a,b\}$ is crossed by $\{x,y\}$. If edge $\{x,y\}$
is removed, then   $\{a,b\}$ is colored
blue and the edges $\{a,x\}, \{a,y\}, \{b,x\}, \{b,y\}$ are colored
black and are planar. If $G_{in}$ is a small subgraph obtained from
$G-C$ and $x$ is in $G_{in}$, then it has a  planar outer triangle
with vertices $a,b$ and $x$.

Algorithm \CGP{} finally applies   $MC_4$  and checks whether the
detected $K_4$ must be embedded as a tetrahedron or as a kite.
However, at this stage of the algorithm, the embedding as a
tetrahedron implies that all edges are kite-covered. A planar
tetrahedron is detected in the first step since there is a
separating 3-cycle and there is a   separating triangle or a
separating triple, otherwise.  Hence, only three cases remain, as
described in Section 9.1 of \cite{cgp-rh4mg-06}.

\begin{lemma} \label{lem:MC4}
If   $MC_4$  applies to algorithm  \CGP{}
then the $K_4$ subgraph is
\begin{enumerate}
  \item  a completely kite-covered tetrahedron, see
  Fig.~\ref{fig:riceball}
  \item  an $SC$-graph, see Fig.~\ref{fig:SCgraph}, or
  \item a kite, see Fig.~\ref{fig:kite}, \\
\noindent and they are checked in this order.
\end{enumerate}
\end{lemma}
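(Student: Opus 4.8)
The plan is to exploit the priority ordering of the gadget search in algorithm \CGP{}. When the $MC_4$ step is reached, the detected subgraph $\kappa = G[a,b,c,d]$ is a maximal complete subgraph on four vertices --- it lies in no $K_5$, since $\kappa$ is a maximal clique and the $MC_5$ step is vacuous by Lemma~\ref{lem:no-MC5} --- and none of the higher-priority gadgets (separating $3$-cycle, separating edge, separating $4$-cycle, separating triple, separating quadruple, separating triangle) applies to the current graph. By \cite{Kyncl-09} (see Fig.~\ref{fig:k4-embeddings}), $\kappa$ has exactly two embeddings, as a tetrahedron or as a kite. The kite embedding is precisely Case~3, so the substance of the lemma lies in the tetrahedron embedding.

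For the tetrahedron I would fix the inner vertex $d$ and the base triangle on $a,b,c$, and argue that every base edge must be kite-covered. The engine is the base triangle: since $d$ and everything drawn inside is enclosed by the triangle, that $3$-cycle separates unless its edges are crossed from outside. I would count the kite-covered base edges. If none is covered, the triangle is a planar separating $3$-cycle, caught in the first step. If exactly one base edge $\{a,b\}$ is covered, then after deleting its crossing candidates $\mathcal{C}[a,b]$ the triangle still separates, so it is a separating triple; if exactly two base edges $\{a,b\}$ and $\{b,c\}$ are covered, the analogous argument with $\mathcal{C}[a,b]$ and $\mathcal{C}[b,c]$ produces a separating triangle (cf.\ the gadget definitions and Lemma~\ref{propertiesA}). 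Each possibility contradicts the hypothesis that $MC_4$ is applied, so all three base edges are kite-covered. The covering status of the three inner edges $da,db,dc$ then distinguishes the completely kite-covered tetrahedron of Case~1 (Fig.~\ref{fig:riceball}) from the single exceptional residue, the $SC$-graph of Case~2 (Fig.~\ref{fig:SCgraph}).

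The main obstacle is exactly this last distinction: once all base edges are covered one must check that no separating $4$-cycle or separating quadruple is hidden among the crossing edges, and then verify that precisely one residual covering pattern of the inner edges fails to be a full kite-cover yet survives every earlier gadget --- this pattern is the $SC$-graph. I would rely on the corresponding enumeration in Section~9.1 of \cite{cgp-rh4mg-06} for the routine details. Finally, the order asserted in the statement --- riceball, then $SC$-graph, then kite --- reflects decreasing rigidity: the riceball is forced to be a tetrahedron and the $SC$-graph is the unique intermediate configuration, while the kite is the default embedding once both tetrahedron cases have been ruled out.
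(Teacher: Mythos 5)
Your proposal is correct and takes essentially the same route as the paper: the paper's own justification (the paragraph preceding the lemma) likewise argues from the gadget priority that an uncovered planar tetrahedron would already have been caught as a separating 3-cycle and a partially covered one as a separating triple or triangle, and then defers the residual enumeration (completely kite-covered tetrahedron, $SC$-graph, kite) to Section~9.1 of \cite{cgp-rh4mg-06}, exactly as you do. Your explicit count of kite-covered base edges ($0$, $1$, $2$ covered $\Rightarrow$ separating 3-cycle, triple, triangle) is a slightly more detailed rendering of the paper's one-line version of the same argument.
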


\begin{figure}
  \centering
  \subfigure[] {
      \includegraphics[scale=0.35]{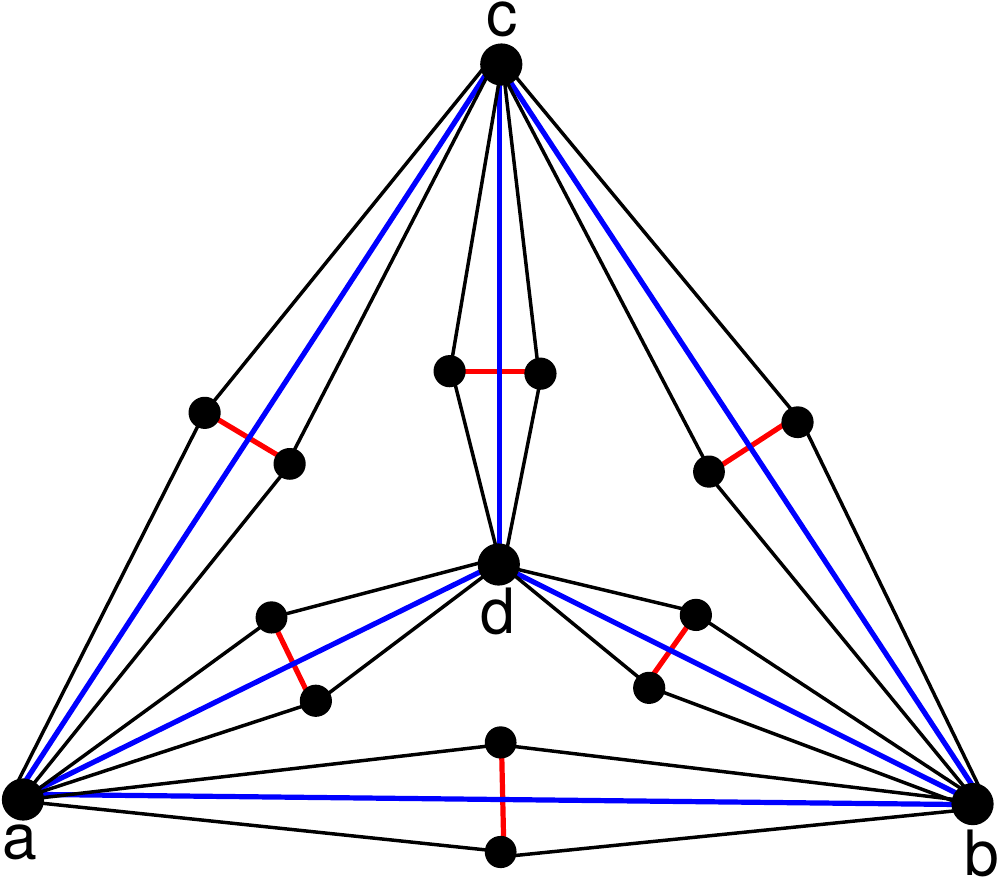}
    \label{fig:riceball}
  }
  \hfil
  \subfigure[] {
       \includegraphics[scale=0.35]{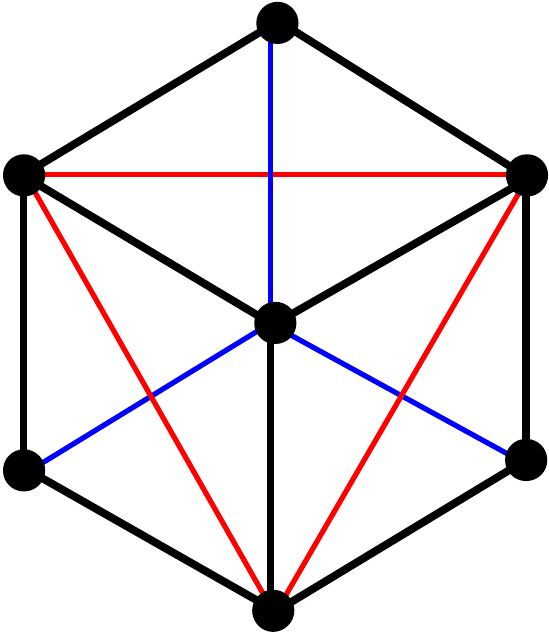}
      \label{fig:SCgraph}
  }
  \caption{A (a) completely kite-covered tetrahedron and (b) an SC-graph }
  \label{fig:MC4rule}
\end{figure}

A crossing edge  of each of the six kites it removed to make progress if
 a completely  kite-covered tetrahedron is detected. The $SC$-graph
plays a central in the graph reduction system of Schumacher
\cite{s-s1pg-86} and Brandenburg \cite{b-ro1plt-16}. Here a crossing
edge is removed from each of the three kites. Plain kites that are
surrounded by planar subgraphs are common in \IC{}-planar and in
\NIC{}-planar graphs, as the analysis of maximal \NIC{}-planar
graphs shows \cite{bbhnr-NIC-16}.

\begin{algorithm}
  \KwIn{A 3-connected graph $G$ with a partial edge coloring  and a boolean
        formula $\eta$. Initially, all edges are uncolored and
        $\eta = \textsf{true}$.}
  \KwOut{A planar embedding of an induced subgraph of $G$, an edge coloring of $G$,  and (an extension of) $\eta$.}
  \BlankLine
  \While{there is a $K_4$ subgraph and $|G| \geq 9$}{
    \If{there is a separating 3-cycle $C = (a, b, c)$ with \linebreak
          $G - C = \{G_{in}, G_{out}\}$}{
      extend the coloring by black  edges for $\Edge{a}{b}, \Edge{b}{c}, \Edge{c}{a}$\;
      \Return \textbf{merge}($\mathcal{B}(G_{in} + C, \eta_{in}), \,
                     \mathcal{B}(G_{out} + C, \eta_{out}))$\;
    }
    \ElseIf{there is a separating edge $\Edge{a}{b}$}{
      extend the coloring by an orange edge $\Edge{a}{b}$\;
      \ForEach{edge $\Edge{x}{y} \in \mathcal{C}[a, b]$}{
        extend the coloring by black edges $\Edge{a}{x}, \Edge{x}{b},\Edge{b}{y}, \Edge{a}{y}$  and a cyan edge
        $\Edge{x}{y}$\;
        add $\sigma(a, b, \mathcal{C}[a, b])$ to $\eta$\;
      }
      \Return $\mathcal{B}(G - \Edge{a}{b}, \eta)$\;
    }
    \ElseIf{there is a separating 4-cycle $C = (a, b, c, d)$ and \linebreak
          $G - C = \{G_{in}, G_{out}\}$}{
      extend the coloring by black  edges   $\Edge{a}{b},\Edge{b}{c},\Edge{c}{d},\Edge{d}{a}$\;
      \leIf{$\Edge{a}{c} \in G_{in}$}{%
        $e = \Edge{b}{d}$%
      }{%
        $e = \Edge{a}{c}$%
      }
      \leIf{$\Edge{a}{c} \in G_{out}$}{%
        $f = \Edge{b}{d}$%
      }{%
        $f = \Edge{a}{c}$%
      }
     \Return \textbf{merge}($\mathcal{B}(G_{in} + C + e, \eta_{in}),
                    \mathcal{B}(G_{out} + C + f, \eta_{out}))$\;
    }
    \ElseIf{there is a separating triple $C = (a, b, c)$ with \linebreak
         $\mathcal{C}[a, b] = \{\Edge{u}{v}\}$}{
      color $\Edge{a}{b}$ red and $\Edge{u}{v}$ blue (or vice-versa) and let $e$ be the red
      edge
      and $\Edge{b}{c},\Edge{c}{a},\Edge{a}{u}, \Edge{a}{v},
        \Edge{b}{u},$ $\Edge{b}{v}$ black\;
      add $\alpha(\kappa)$ to $\eta$ where $\kappa = G[\{a, b, u, v\}]$\;
     \Return  $\mathcal{B}(G - e, \eta)$\;
    }
    \ElseIf{there is a separating quadruple $C = (a, b, c, d)$ with
        \linebreak
         $\mathcal{C}[a, b] = \{x,y\}$}{
      color $\{a,b\}$ red and $\{ x,y \}$ blue (or vice-versa) and let $e$ be the red
      edge
      and $\Edge{b}{c}, \Edge{c}{d}, \Edge{d}{a},
        \Edge{a}{x}, \Edge{a}{y},$ $\Edge{b}{x}, \Edge{b}{y}$ black\;
      add $\alpha(\kappa)$ to $\eta$ where $\kappa = G[a, b, x, y]$\;
      \Return $\mathcal{B}(G - e, \eta)$\;
    }
    \ElseIf{there is a separating triangle $C = (a, b, c)$ with
        $\mathcal{C}[a, b] = \{\Edge{u}{v}\}$ \linebreak
          and $\mathcal{C}[b, c] = \{\Edge{x}{y}\}$}{
      color one of $\Edge{a}{b}$ and $\Edge{u}{v}$ and one of  $\Edge{b}{c}$  and $\Edge{x}{y}$ red
          and the other ones blue and let $e$ and $f$ be the red edges \linebreak
        and color the edges \linebreak
        $\Edge{c}{a}, \Edge{a}{u}, \Edge{a}{v}, \Edge{b}{u}, \Edge{b}{v}, \Edge{b}{x}, \Edge{b}{y},
        \Edge{c}{x}, \Edge{c}{y}$ black\;
      add $\alpha(\kappa_1) \wedge \alpha(\kappa_2)$ to $\eta$ where $\kappa_1 = G[a, b,
        u, v]$ and $\kappa_2 = G[b, c, x, y]$\;
      \Return  $\mathcal{B}(G - \{e,f\}, \eta)$\;
    }
    \lElseIf{there is a $K_5$}{%
      \Return $(G, \textsf{false})$ and stop
    }
    \lElse{
      \textbf{$MC_4$}($G$, $\eta$)
    }
  }
  \BlankLine
  final-check($G$, $\eta$)\;
  \caption{Algorithm $\mathcal{B}$}
  \label{alg:NIC}
\end{algorithm}

\begin{algorithm}
  \KwIn{A 3-connected graph $G$ with a partial edge coloring
        and a boolean formula $\eta$.}
  \KwOut{A subgraph of $G$, an edge coloring, and $\eta$.}
  \BlankLine
  \If{the detected $K_4$ subgraph $\kappa$ is a completely
    kite-covered tetrahedron}{
    \ForEach{edge $e$ of $\kappa$}{
      color $e = \{a,c\}$ red and the crossing edge $f = \{b,d\}$ of $e$
        blue or vice versa and extend the coloring by black edges $\{a,b\},
        \{b,c\}, \{c,d\}, \{d,a\}$\;
      add a clause $\alpha(\kappa)$ to $\eta$\;
    }
    collect the red edges into a set $F$\;
    \Return $(G-F, \eta)$\;
  }
  \ElseIf{the detected $K_4$ subgraph is $SC$}{
    \ForEach{kite $\kappa$ of $G$ with a pair of crossing edges $e,f$}{%
      color $e$ red and $f$ blue and
        color the remaining edges black\;
     add a clause $\alpha(\kappa)$ to $\eta$\;
    }
    collect the red edges into a set $F$\;
    \Return $(G-F, \eta)$\;
  }
  \Else {color the crossing edges of the detected kite $\kappa$ red and blue
      \linebreak
        \hspace*{0.5cm}
      and color the other edges of $\kappa$ black\;
    add a clause $\alpha(\kappa)$ to $\eta$\;
    \Return $(G - e, \eta)$\ where $e$ is the red edge;
  }
  \caption{Algorithm $MC_4$}
  \label{alg:K4}
\end{algorithm}

\begin{algorithm}
  \KwIn{A 3-connected graph $G$ with a partial edge coloring and a boolean
    formula $\eta$.}
  \KwOut{A planar subgraph of $G$ with an edge coloring and $\eta$.}
  \BlankLine
  \If {$G$ is a triangulated planar graph}{
    extend the coloring by black edges\;
    \Return $(G, \eta)$;
  } \Else {
    \If {$|G| \leq 8$}{
      \If {$G$ has a 1-planar embedding $\mathcal{E}(G)$ extending  the partial edge
           coloring $\gamma$}{
            extend the coloring by grey edges\;
            collect one edge from each pair of crossing edges of
            $\mathcal{E}(G)$ into $F$\;
        \Return $(G-F,   \eta)$\;
      }
    }
    \Return $(G, \textsf{false})$ and \textbf{stop}%
}
  \caption{Algorithm final-check for 1-planar Graphs}
  \label{alg:final-check}
\end{algorithm}

\newpage

The subroutine \emph{merge} reverts the partition into an inner and
an outer subgraph, takes the edge coloring of the subgraphs and
combines the boolean formulas by a conjunction. It ignores the chord
that was added for the triangulation if there is a partition by
a separating  4-cycle.
If $\Edge{x_1}{y_1}, \ldots, \Edge{x_r}{y_r}$ are the crossable
edges of a separating edge $\Edge{a}{b}$, then
$\sigma(a,b,\mathcal{C}[a, b]) = (\alpha_{\kappa_1} \vee \ldots \vee
\alpha_{\kappa_r})$ where $\kappa_i = G[a,b, x_i, y_i ]$ for
$i=1,\ldots, r$ is a $K_4$.

Algorithm final-check takes any 1-planar embedding and makes
progress by removing one edge from each pair of crossing edges.
Uncolored edges are  colored grey although the used embedding
determines pairs of crossing edges. A grey edge may be planar in one
1-planar embedding of $G$ and crossed in another.

The correctness  of algorithms $\mathcal{B}$ uses the following
properties of Algorithm \CGP{}, as proved in
\cite{cgp-rh4mg-06}.

\begin{lemma} \label{lem:correct1}
  If $\mathcal{E}(G)$ is a triangulated 1-planar embedding of $G$, then
  algorithm $\mathcal{B}$ succeeds and returns a triangulated planar spanning
  subgraph, an edge coloring, and a boolean formula $\eta$ such that
  each black edge is planar in $\mathcal{E}(G)$, each red, blue, and orange
  edge is crossed in $\mathcal{E}(G)$, there are pairs of red and blue edges
  that cross, and each orange edge crosses a cyan one. Moreover, one edge
  from each pair of crossing edges is red, orange, or grey.
\end{lemma}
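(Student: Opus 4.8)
The plan is to prove Lemma~\ref{lem:correct1} by induction on the recursion depth of algorithm $\mathcal{B}$, following the case structure of the \textbf{while}-loop. The key observation is that $\mathcal{B}$ is the program of \CGP{} augmented only with the edge coloring and the formula $\eta$; since we assume a triangulated 1-planar embedding $\mathcal{E}(G)$ exists, the correctness of \CGP{} (proved in \cite{cgp-rh4mg-06}) already guarantees that $\mathcal{B}$ succeeds, terminates at a triangulated planar spanning subgraph, and makes the same structural decisions. Thus the real content is to verify that the \emph{colors} assigned in each case agree with the roles the edges actually play in $\mathcal{E}(G)$, and that the invariant ``one edge from each removed crossing pair is red, orange, or grey'' is maintained.

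First I would establish the base case: when $\mathcal{B}$ reaches \emph{final-check}, either $G$ is triangulated planar — in which case every edge is colored black and is trivially planar in $\mathcal{E}(G)$ — or $|G|\le 8$, where one edge from each crossing pair is colored grey and removed, so the invariant holds by inspection. For the inductive step I would walk through the gadgets in the order $\mathcal{B}$ tests them, invoking Lemma~\ref{propertiesA} at each stage. For a separating 3-cycle, part~(1) of Lemma~\ref{propertiesA} says the cycle edges are planar at detection, justifying the black coloring; the two recursive calls handle $G_{in}+C$ and $G_{out}+C$ and \emph{merge} combines their colorings and conjoins their formulas. For a separating edge $\{a,b\}$, part~(2) gives that $\{a,b\}$ is crossed in every 1-planar embedding (hence orange is correct) and that the edges $\{a,x\},\{b,x\}$ are planar (hence black); the removed edge $\{a,b\}$ is orange, maintaining the invariant. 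The separating triple, quadruple, and triangle cases are handled identically using parts~(3) and~(4): the edges $\{a,b\}$ (and $\{b,c\}$) and the crossable edges are crossed in every embedding, so coloring one of each crossing pair red and the other blue is justified, the surrounding edges are black, and the removed red edges satisfy the invariant. For the separating 4-cycle I would additionally note that the chord $f$ is auxiliary and discarded by \emph{merge}, so it does not affect the claimed coloring of $G$.

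The remaining cases are the $K_5$ step — vacuous by Lemma~\ref{lem:no-MC5}, since a triangulated 1-planar embedding forces a higher-priority gadget — and the $MC_4$ step, which by Lemma~\ref{lem:MC4} reduces to a completely kite-covered tetrahedron, an $SC$-graph, or a plain kite. In each $MC_4$ subcase the crossing structure is fully determined, so coloring one crossing edge of each kite red and its partner blue is correct, the kite's outer edges are planar and colored black, and the red edges collected into $F$ are exactly one per crossing pair. The main obstacle I anticipate is the interaction between the coloring rule and previously marked edges: \CGP{} treats an edge as planar once it is marked, even though that edge may have been kite-covered and hence crossed in $\mathcal{E}(G)$. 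This is precisely why the coloring distinguishes blue and cyan from black, with blue and cyan overruling black under the extension rule of the coloring definition. I would therefore take care, in the separation cases, to argue that an edge of $C$ which was genuinely crossed in $\mathcal{E}(G)$ must already carry a blue or cyan color from an earlier step, so that extending it by black does not create a conflict and does not falsely assert planarity; this consistency check — reconciling \CGP{}'s ``treat as planar'' convention with the embedding-faithful coloring — is the delicate part of the argument, and everything else follows routinely from the established properties of \CGP{} together with Lemmas~\ref{propertiesA}, \ref{lem:no-MC5}, and~\ref{lem:MC4}.
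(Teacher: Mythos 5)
Your proposal is correct and takes essentially the same route as the paper: the paper's (much terser) proof likewise derives success of $\mathcal{B}$ from the correctness of \CGP{} in \cite{cgp-rh4mg-06} together with Lemma~\ref{lem:no-MC5}, and then verifies the stated coloring by classifying every $K_4$ as a kite, a planar tetrahedron, or a kite-covered tetrahedron, with one edge per kite colored red, orange, or grey --- exactly the case analysis you spell out via Lemmas~\ref{propertiesA} and~\ref{lem:MC4}. Your explicit reconciliation of \CGP{}'s ``treat marked edges as planar'' convention with the blue/cyan overruling rule is handled in the paper's surrounding text (the definition of a coloring extension) rather than inside the proof, so it is a welcome expansion, not a deviation.
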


\begin{proof}
  By the assumptions, $G$ is a triangulated 1-planar graph and
  algorithm \CGP{}  succeeds on $G$. Since $\mathcal{B}$ extends
 \CGP{}, it also succeeds, since  $MC_5$ does not apply, as
  shown in Lemma \ref{lem:no-MC5}. Then $\mathcal{B}$ provides the edge coloring
  as stated. All $K_4$ subgraphs are scanned and their embedding is
  classified as a kite, as a planar tetrahedron (after a separating
  3-cycle), or as a tetrahedron with kite-covered edges. One edge of each
  detected kite is colored red or orange, or it is colored grey in a small
  subgraph.
  %
\end{proof}

\section{Specialization to \IC{}- and \NIC{}-Planarity} \label{sect:IC+NIC}

For the test of \IC{}- and \NIC{}-planarity we must specialize
  $MC_4$   and final-check. A completely kite-covered $K_4$
is not \IC{}-planar and the $SC$-graph is not \NIC{}-planar. If such
a subgraph is encountered, then the recognition algorithm  for
\IC{}-planarity (\NIC{}-planarity) fails and the modified algorithm
  returns \textsf{false} and stops.

If $G$ is a small graph of size at most eight, then consider all
\IC{}-planar (\NIC{}-planar) embeddings of $G$ that extend the given
partial coloring. Such graphs occur at separating 3-cycles and
separating 4-cycles and have an outer cycle $C$ whose edges are
colored black, blue, or cyan. If an edge $e$ of $C$ is colored blue,
then $e$ is crossed by an edge $f$, which is colored red and is not
part of the evoked input graph $G$. Then there are black edges in
$G$ which are part of the kite with the pair of crossing edges $\{e,
f\}$. Such situations decrease the number of \IC{}-planar
(\NIC{}-planar) embeddings of $G$ and reduce ambiguities. An
ambiguity in the embeddings is due to   $K_5$, which may choose any
edge of an outer triangle as part of its kite, as shown in
Fig.~\ref{fig:K5fixedface}. In the \IC{}-planar case, we obtain the
formula  $(a_{\kappa_1} \wedge b_{\kappa_1}) \vee (a_{\kappa_2}
\wedge  c_{\kappa_2}) \vee (b_{\kappa_3} \wedge c_{\kappa_3})$,
where $a, b, c$ are the vertices of the outer face and $\kappa_1,
\kappa_2, \kappa_3$ are the three possible kites. For
\NIC{}-planarity be obtain $(e_{\kappa_1} \vee
 f_{\kappa_2} \vee g_{\kappa_3})$, where $e, f$  and $g$ are the edges
of the outer face. Note that  edge $h$ between the two inner
vertices is in each of the three kites, but $h$ is   ignored for the
boolean formula. It is internal and its three occurrences would
violate the special structure of the boolean formula $\eta$ that is
used for an efficient evaluation. Similarly, if $G$ has eight
vertices with vertices $a, b,c, d$ in the outer face (ignoring the
added chord) and internally two $K_5$ including the vertices $a,b,c$
and $b,c,d$, then there are two \IC{}-planar embeddings of $G$  and
the formula is $(a_{\kappa} \wedge b_{\kappa} \wedge c_{\kappa'}
\wedge d_{\kappa'}) \vee (a_{\kappa} \wedge c_{\kappa} \wedge
b_{\kappa'} \wedge d_{\kappa'})$, where $\kappa$ and $\kappa'$ are
two kites. A \NIC{}-planar  embedding uses the diagonal $\{b, c\}$
for one kite and one outer edge for the second kite so that we
obtain the formula $(e_{\kappa_1} \vee f_{\kappa_2}  \vee
g_{\kappa_3} \vee h_{\kappa_4})$ where $e,f,g,h$ are the edges of
the outer face and $\kappa_1, \ldots, \kappa_4$ are possible kites.

The extended version  of algorithm final-check in $\mathcal{B}_{IC}$
($\mathcal{B}_{NIC}$)  computes all \IC{}-planar
(\NIC{}-planar)  embeddings of the small graph $G$ of size at most
eight that extend the given partial coloring. Each embedding is
expressed by a boolean formula in CNF using only variables for the
vertices (edges) in the outer face. These formulas are combined by
a disjunction to express the set of all \IC{}-planar (\NIC{}-planar)
embeddings.

\begin{algorithm}
  \KwIn{A 3-connected graph $G$ with a partial edge coloring and a boolean
    formula $\eta$.}
  \KwOut{A planar subgraph of $G$ with an edge coloring and $\eta$.}
  \BlankLine
  \If {$G$ is a triangulated planar graph}{
    extend the coloring by black edges\;
    \Return $(G, \eta)$;
  } \Else {
    \If {$|G| \leq 8$}{
      \If {$G$ has an \IC{}-planar (\IC{}-planar) embedding extending the partial edge
           coloring $\gamma$}{
     Extend the coloring by grey edges\;
     Set $\beta =$ \textsf{false} for a boolean formula $\beta$\;
     \ForEach{\IC{}-planar (\NIC{}-planar) embedding  $\mathcal{E}(G)$ extending $\gamma$}
            {
            Express the embedding  by a boolean formula $\alpha$
            using only variables for the vertices (edges) in the outer face
            of $\mathcal{E}(G)$ (with the added chord ignored)\;
            Add $\alpha$ by a disjunction to $\beta$\;
      }
        choose any \IC{}-planar (\NIC{}-planar) embedding $\mathcal{E}(G)$ with a set $F$ of crossed
          edges\;
         add $\beta$ by a conjunction to $\eta$\;
        \Return $(G-F,   \eta)$\;
      }
}
    \Return $(G, \textsf{false})$ and stop%
 }
  \caption{Algorithm final-check for \IC{}-planar and \NIC{}-planar Graphs}
  \label{alg:final-check-IC}
\end{algorithm}

\newpage

The boolean formula $\eta$  collects the clause $\alpha(\kappa)$
of all possible kites $\kappa$, but it does not express \IC{}- and
\NIC{}-planarity and a mutual exclusion of two kites with a common
vertex and edge, respectively. Therefore, we extend $\eta$ to
$\eta^+$, and we finally transform $\eta^+$  into $\eta^*$
for an efficient evaluation.

\begin{definition}
  The \emph{\IC{}-extension} (\emph{\NIC{}-extension}) $\eta^+$ of a boolean formula
  $\eta$ is obtained by adding a
  clause
  \begin{equation*}
    \neg x_{\kappa} \vee \neg x_{\kappa'}
  \end{equation*}
  if there is a vertex (edge) $x$ with  two variables $x_{\kappa}$ and $x_{\kappa'}$ and $\kappa  \neq \kappa'$
  in  clauses of $\eta$.
\end{definition}

Let $\mathcal{B}_{IC}$ and $\mathcal{B}_{NIC}$ be the algorithms
obtained from algorithm $\mathcal{B}$ by  the versions for
\IC{}-planar and \NIC{}-planar graphs, respectively.
$\mathcal{B}_{IC}$ returns \textsf{false} and stops  if
$\mathcal{B}$ encounters a separating triangle, if $MC_4$ encounters
a completely kite-covered $K_4$ or an $SC$-graph and it uses
Algorithm 4 for the final test. Accordingly, $\mathcal{B}_{IC}$
returns \textsf{false} and stops  if $MC_4$  encounters an
$SC$-graph and uses Algorithm 4 for the final test.

\begin{theorem} \label{thm:charICplanar}
  A graph $G$ is triangulated \IC{}-planar (\NIC{}-planar) if and only if Algorithm
  $\mathcal{B}_{IC}$ $(\mathcal{B}_{NIC})$ succeeds and returns a  boolean formula $\eta$
  whose \IC{}-extension (\NIC{}-extension) $\eta^+$ is satisfiable.
\end{theorem}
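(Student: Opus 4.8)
The plan is to prove both directions of the biconditional, establishing the correspondence between triangulated $\IC$-/$\NIC$-planar embeddings and satisfying truth assignments of $\eta^+$. I would structure the proof around the one-to-one correspondence between feasible embeddings and truth assignments that was announced in the first theorem and in the discussion of the boolean formula. The key observation is that $\eta$ already records, for every $K_4$ subgraph $\kappa$ that may be a kite, a clause $\alpha(\kappa)$, and that a satisfying assignment selects which $K_4$'s are actually drawn as kites. The $\IC$-/$\NIC$-extension then enforces the defining restriction: each vertex (edge) lies in at most one kite.

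\paragraph{Forward direction.} First I would assume $G$ is triangulated $\IC$-planar (resp. $\NIC$-planar) and show $\mathcal{B}_{IC}$ (resp. $\mathcal{B}_{NIC}$) succeeds with a satisfiable $\eta^+$. Since every $\IC$-planar (\NIC{}-planar) embedding is in particular a triangulated $1$-planar embedding, Lemma~\ref{lem:correct1} guarantees that the underlying algorithm $\mathcal{B}$ succeeds and produces the claimed edge coloring. I must then check that the extra stopping conditions of $\mathcal{B}_{IC}$ (resp. $\mathcal{B}_{NIC}$) are never triggered: a separating triangle forces a vertex into two kites and so cannot occur in an $\IC$-planar embedding, and by Lemma~\ref{lem:MC4} the only $MC_4$ obstructions are the completely kite-covered tetrahedron (which puts every vertex in a kite, violating $\IC$) and the $SC$-graph (whose shared edge lies in two kites, violating $\NIC$); both are excluded by feasibility. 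Given a fixed feasible embedding $\mathcal{E}(G)$, I define a truth assignment by setting $x_\kappa = \textsf{true}$ exactly when $\kappa$ is drawn as a kite in $\mathcal{E}(G)$. Each clause $\alpha(\kappa)$ added during a forced step (separating triple, quadruple, triangle) is satisfied because those $\kappa$ are kites in every embedding by Lemma~\ref{propertiesA}; each disjunctive clause $\sigma$ from a separating edge is satisfied because $\{a,b\}$ crosses exactly one crossable edge; and the disjunctions from final-check are satisfied because the restriction of $\mathcal{E}(G)$ to the small subgraph is one of the enumerated feasible local embeddings. Finally the $\IC$-/$\NIC$-extension clauses $\neg x_\kappa \vee \neg x_{\kappa'}$ hold precisely because in $\mathcal{E}(G)$ no vertex (edge) belongs to two kites.

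\paragraph{Backward direction.} Conversely, I would assume $\mathcal{B}_{IC}$ (resp. $\mathcal{B}_{NIC}$) succeeds and $\eta^+$ has a satisfying assignment, and reconstruct a feasible embedding. The black/red/blue/orange colored edges are forced and assemble, via the \textbf{merge} operations, into a triangulated $1$-planar embedding of the planar spanning subgraph together with the crossing pairs recorded by the red--blue and orange--cyan colorings. The free choices are exactly the grey edges in the small subgraphs and the kite selections governed by the disjunctive clauses; a satisfying assignment of $\eta^+$ picks one disjunct per such clause, which I read back as a concrete local embedding of each small subgraph and a concrete crossing partner for each separating edge. Gluing these along the planar separating cycles yields a genuine triangulated $1$-planar embedding $\mathcal{E}(G)$. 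It remains to verify feasibility: a vertex (edge) of $G$ lies in a kite $\kappa$ in $\mathcal{E}(G)$ only if $x_\kappa$ is true, so the extension clauses $\neg x_\kappa \vee \neg x_{\kappa'}$ force that no vertex (edge) is in two kites, which is exactly $\IC$-planarity ($\NIC$-planarity).

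\paragraph{Main obstacle.} The delicate point, and where I expect the real work to lie, is the gluing step in the backward direction together with the claim that the boolean formula faithfully captures \emph{all} feasible embeddings and nothing more. I must argue that the local kite choices encoded by the disjunctive clauses are mutually independent across different small subgraphs and separating edges --- so that any combination picked by a satisfying assignment actually glues into a globally consistent embedding along the planar separating cycles whose edges are treated as black --- and conversely that the extension clauses do not over-constrain, i.e. that the only coincidences between kite variables are the genuine vertex/edge sharings. The subtlety is that an outer edge of a small subgraph may be colored blue or cyan because it participates in a kite external to that subgraph; I would handle this by invoking the normal-form property that triangulated $\IC$-/$\NIC$-planar embeddings admit no separation pairs, so each kite is localized and the shared-variable bookkeeping of the $\IC$-/$\NIC$-extension exactly matches vertex/edge incidences across the whole embedding.
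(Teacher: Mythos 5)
Your proposal is correct and takes essentially the same route as the paper's proof: both directions rest on Lemma~\ref{lem:correct1}, on translating kites of a feasible embedding into the assignment $x_\kappa=\textsf{true}$, and on the extension clauses $\neg x_\kappa \vee \neg x_{\kappa'}$ enforcing that no vertex (edge) lies in two kites. If anything, your backward direction is more careful than the paper's terse ``undo of all actions'' argument, since you let the satisfying assignment (rather than the algorithm's own arbitrary choices at separating edges and in final-check) dictate the local embeddings before gluing along the planar separating cycles.
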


\begin{proof}
By Lemma \ref{lem:correct1}, if algorithm $\mathcal{B}_{IC}$  $(\mathcal{B}_{NIC})$
 succeeds, then $G$ is triangulated 1-planar, since
$\mathcal{B}_{IC}$  $(\mathcal{B}_{NIC})$  specializes algorithm $\mathcal{B}$.
  Construct the embedding $\mathcal{E}(G)$ by an undo of all actions taken by
  $\mathcal{B}$. Then $\mathcal{E}(G)$ is a 1-planar embedding in which all
  black edges are planar and all red, blue, and orange edges are crossed.
  For every $K_4$ subgraph which cannot be
  embedded planar there is a clause expressing a crossing.  Consider a
  valid truth assignment of $\eta^+$. If a clause of $\alpha(\kappa)$ of $\eta$
  resp. $\eta^+$ is satisfied by $x_{\kappa}= \textsf{true}$, then $x_{\kappa'} =
  \textsf{false}$ for all $\kappa'$ such that $\kappa$ and $\kappa'$ share vertex (edge) $e$, which
  implies that the 1-planar embedding $\mathcal{E}(G)$ is \IC{}-planar (\NIC{}-planar).
  Conversely, if $G$ is triangulated \IC{}-planar (\NIC{}-planar), then algorithms $\CGP{}, \mathcal{B}$ and
$\mathcal{B}_{IC}$  $(\mathcal{B}_{NIC})$ succeed. There is an embedding $\mathcal{E}(G)$
that corresponds to the computed   edge coloring.
The  extension $\eta^+$ of the computed boolean formula $\eta$ is satisfied by a truth assignment
$x_{\kappa}= \textsf{true}$ for each kite of  $\mathcal{E}(G)$, since each vertex (edge) $x$ is in
at most one kite.
  %
\end{proof}

Algorithm \CGP{} runs in cubic time \cite{cgp-rh4mg-06}, and so do
$\mathcal{B}, \mathcal{B}_{IC}$ and $\mathcal{B}_{NIC}$. For a cubic
time recognition it remains to show that the satisfiability problem
of $\eta^+$ can be solved in cubic time. This is not immediately
clear, since   the algorithms  collect clauses for the CNF formula $\eta$.
Fortunately,
$\eta$ and $\eta^+$ have a special structure which is used for an
efficient evaluation.

\subsection{Evaluate \IC{}-planar Formulas}

First, consider the \IC{}-planar case.  There is a disjunction at
two places: (i) a separating edge and  (ii) small graphs.

If $e = \{a,b\}$ is a separating edge with candidates $f_1, \ldots,
f_r$ in $\mathcal{C}[a,b]$   for a crossing, then the formula
$\sigma(a,b,\mathcal{C}[a,b])$ expresses all possible kites from
which one is realized. Since the vertices $a$ and $b$ are in the
kite, $\sigma(a,b,\mathcal{C}[a,b])$ can be transformed into
$(a_{\kappa} \wedge b_{\kappa}) \wedge ((x^1_{\kappa} \wedge
y^1_{\kappa}) \vee \ldots  \vee (x^r_{\kappa} \wedge y^r_{\kappa}))$
with $f_i = \{x^i, y^i\}$ for $i=1,\ldots, r$ and $r > 1$.

If $G$ is a small graph after a separating 3- or 4-cycle $C$, then
\IC{}-planar embeddings can be expressed by a 2SAT formula with variables for the vertices of $C$.

\begin{lemma} \label{lem:2SATIC}
If $G$ is an \IC{}-planar graph without a separating edge, then the
\IC{}-extension $\eta^+$  of the formula $\eta$ computed by
algorithm $\mathcal{B}_{IC}$ is equivalent to a $2$SAT formula.
\end{lemma}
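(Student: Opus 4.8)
The plan is to read off every clause that $\eta^+$ can contain under the two hypotheses and to check that each is equivalent to a conjunction of clauses with at most two literals. Since $G$ has no separating edge, the disjunctions $\sigma(a,b,\mathcal{C}[a,b])$ never enter $\eta$; and since $G$ is \IC{}-planar the run of $\mathcal{B}_{IC}$ succeeds, so it never meets a separating triangle, a completely kite-covered tetrahedron, or an $SC$-graph (each of which makes $\mathcal{B}_{IC}$ stop with \textsf{false}). Hence the clauses of $\eta$ arise only from (i) forced single kites $\alpha(\kappa)$ produced at separating triples, separating quadruples and the plain-kite branch of $MC_4$, and (ii) the disjunctions produced by \textsf{final-check} on induced subgraphs of size at most eight. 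The clauses added by the \IC{}-extension have the form $(\neg x_\kappa\vee\neg x_{\kappa'})$.

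First I would dispose of the easy parts. A forced clause $\alpha(\kappa)=(x^1_\kappa\wedge x^2_\kappa\wedge x^3_\kappa\wedge x^4_\kappa)$ is a conjunction of positive literals, i.e.\ four unit clauses, and every \IC{}-extension clause is already binary; both are admissible for a $2$SAT formula. The only obstruction are the disjunctions from \textsf{final-check}. Here the decisive elementary observation is that a disjunction of \emph{two} conjunctions is equivalent to a binary CNF: if $A=\bigwedge_i p_i$ and $B=\bigwedge_j q_j$, then $A\vee B\equiv\bigwedge_{i,j}(p_i\vee q_j)$, and each clause $(p_i\vee q_j)$ has exactly two literals, regardless of how large $A$ and $B$ are. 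Consequently it suffices to show that every disjunctive clause of $\eta$ has at most two disjuncts; together with the binary \IC{}-extension clauses the whole of $\eta^+$ is then a $2$SAT formula.

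So the heart of the proof is the claim that, in the \IC{}-planar case without separating edges, each \textsf{final-check} disjunction ranges over at most two kites. I would establish this by inspection of the graphs of size at most eight, using two facts. A candidate kite of such a subgraph crosses a pair of its edges, and a boundary edge can be the crossed partner only if it is colored \emph{cyan}; but cyan is assigned \emph{only} at a separating edge, so under the hypothesis no boundary edge is cyan, and the crossed edge of any admissible kite must be internal to the subgraph. Moreover, distinct candidate kites of one subgraph share a vertex, so by \IC{}-planarity they are mutually exclusive, which is exactly what the \IC{}-extension records. The finite case analysis then shows that the surviving options never exceed two; the eight-vertex configuration with the two nested copies of $K_5$ already realizes the extremal case and yields the two-disjunct clause $(a_\kappa\wedge b_\kappa\wedge c_{\kappa'}\wedge d_{\kappa'})\vee(a_\kappa\wedge c_\kappa\wedge b_{\kappa'}\wedge d_{\kappa'})$.

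I expect the bounding of the number of disjuncts to be the main obstacle, because a genuine three-way choice among pairwise mutually exclusive kites would be fatal: the three assignments realizing the single kites of a $K_5$ have componentwise majority equal to the all-false assignment, which is not a model, so ``exactly one of three'' is not expressible in $2$SAT (its solution set is not closed under coordinatewise majority). The argument must therefore genuinely use the absence of cyan boundary edges—equivalently, of separating edges—to rule out a free three-way disjunction; once at most two disjuncts remain, the conversion $A\vee B\equiv\bigwedge_{i,j}(p_i\vee q_j)$ finishes the proof.
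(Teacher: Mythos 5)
There is a genuine gap at the heart of your argument: the claim that every \textsf{final-check} disjunction has at most two disjuncts is false, and the mechanism you invoke to prove it does not apply. A $K_5$ cut off by a \emph{separating 3-cycle} --- which requires no separating edge at all; the star-of-$K_5$ graph of Fig.~\ref{fig:multiK5} is built entirely from this situation --- produces exactly the three-way clause $(a_{\kappa_1}\wedge b_{\kappa_1})\vee(a_{\kappa_2}\wedge c_{\kappa_2})\vee(b_{\kappa_3}\wedge c_{\kappa_3})$, since the $K_5$ with fixed black outer triangle has three candidate kites (Fig.~\ref{fig:K5fixedface}). Your ``no cyan boundary edge, hence the crossed edge is internal'' observation eliminates none of these options: in each of the three embeddings the kite keeps the inner edge and one outer edge \emph{planar}, and the two crossing edges run between inner and outer vertices, i.e.\ are internal anyway. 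So the three-way ambiguity survives your hypothesis, and your proof strategy --- eliminate all disjunctions of width greater than two, then expand $A\vee B\equiv\bigwedge_{i,j}(p_i\vee q_j)$ --- cannot be completed.

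Ironically, your majority-closure remark is correct and pinpoints exactly why the paper's proof does something different: on the \emph{original} six variables the three-way block (with its \IC{}-extension exclusions, which make any two full kites mutually exclusive) is indeed not expressible in 2SAT, precisely by the coordinatewise-majority argument you give. The paper's resolution is not to rule the block out but to \emph{re-encode} it: it replaces the block by the 2SAT formula $(a_{\kappa}\vee b_{\kappa})\wedge(a_{\kappa}\vee c_{\kappa})\wedge(b_{\kappa}\vee c_{\kappa})$ over \emph{fresh} variables for a new ``virtual kite'' $\kappa$ --- an equisatisfiable substitution that preserves the external interface (which outer vertices of the small graph are consumed by a kite, namely at least two of $a,b,c$), not a logical equivalence on the original variables. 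Your own argument shows why the fresh variables are unavoidable; you identified the right obstruction but drew the wrong conclusion (``three-way cannot occur'') instead of the paper's conclusion (``three-way occurs but collapses under variable merging because the three kites pairwise share outer vertices''). The remainder of your analysis (unit clauses from forced kites, binary extension clauses, the two-disjunct product expansion for the eight-vertex two-$K_5$ configuration) is sound and matches the paper, but without the virtual-kite replacement the lemma is not proved.
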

\begin{proof}
The boolean formula of a kite $\kappa = G[a,b,c,d]$ is
$\alpha(\kappa) = (a_{\kappa} \wedge  b_{\kappa} \wedge c_{\kappa}
\wedge d_{\kappa})$ and $\alpha(\kappa)$ is added to $\eta$ at a
separating triple, a separating quadruple, and at a kite in $MC_4$.
 Two
subexpressions are combined by a conjunction at a separating 3- and 4-cycle.

It thus remains to show that the \IC{}-extension of a formula $\eta$
for a small graph $H$ is equivalent to a 2SAT formula, and is
replaced by the 2SAT formula for further computations. Graph $H$ is
obtained by a separating 3- or 4-cycle. Then the edges of the cycle
are black, blue or cyan and are treated as planar. Since blue or
cyan imposes further restrictions we suppose they are black. First,
suppose  there is a 3-cycle with vertices $a,b,c$. If there is a
planar embedding, then \textsf{true} is returned. Otherwise, if
every embedding of $G$ contains a kite, then $\eta = (a_{\kappa_1}
\wedge b_{\kappa_1}) \vee (a_{\kappa_2} \wedge c_{\kappa_2}) \vee
(b_{\kappa_3} \wedge c_{\kappa_3})$ is returned if $H$ is $K_5$, see
Fig.~\ref{fig:K5fixedface}. The \IC-extension adds the clauses
$(\neg a_{\kappa_1} \vee \neg a_{\kappa_2}), (\neg b_{\kappa_1} \vee
\neg b_{\kappa_3})$ and $(\neg c_{\kappa_2} \vee \neg
c_{\kappa_3})$. The combined subexpression is equivalent to the 2SAT
formula $(a_{\kappa} \vee b_{\kappa}) \wedge (a_{\kappa} \vee
c_{\kappa}) \wedge (b_{\kappa} \vee c_{\kappa})$  for a new virtual
kite $\kappa$ and is replaced by this formula. If $G$ has 6, 7 or 8
vertices, then $\eta$ is a 2SAT formula, since a single kite
contains at most two of the outer vertices or there are two kites
and 8 vertices and the formula is equivalent to $(a_{\kappa} \wedge
b_{\kappa} \wedge c_{\kappa})$. Similarly, if $H$ is obtained by a
separating 4-cycle $C = (a,b,c,d)$, then there is a $K_5$ with three
outer vertices $a,b,c$ as above, or $\eta$ is a 2SAT formula. In
particular, if $H$ contains two $K_5$ with vertices $a,b,c, u,v$ and
$a,c,d,x,y$, then $\eta = (a_{\kappa_1} \wedge b_{\kappa_1})  \wedge
(c_{\kappa_2} \wedge d_{\kappa_2}) \vee (a_{\kappa_3} \wedge
c_{\kappa_3})  \wedge (b_{\kappa_4} \wedge d_{\kappa_4})$, which
after the \IC{}-extension is equivalent to $(a_{\kappa} \wedge
b_{\kappa} \wedge c_{\kappa} \wedge d_{\kappa})$.
%
\end{proof}

The satisfiability problem for $\eta^+$ can be reduced to a series
of 2SAT satisfiability problems, which altogether can be solved in
linear time in the length of $\eta^+$.
This technique was used in \cite{bdeklm-IC-16}.

\begin{lemma} \label{lem:evalIC}
There is a linear time algorithm (in the length of $\eta^+$) to test the satisfiability of the
\IC{}-extension $\eta^+$ of   a triangulated \IC{}-planar graph.
\end{lemma}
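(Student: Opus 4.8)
The plan is to show that $\eta^+$ is, up to a single easily-identified kind of disjunction, an instance of 2SAT, and then to dispose of those disjunctions by a short sequence of 2SAT calls as in \cite{bdeklm-IC-16}. First I would read off the exact shape of every clause that $\mathcal{B}_{IC}$ can emit. The \IC-extension only ever contributes clauses $\neg x_\kappa \vee \neg x_{\kappa'}$, which are $2$-clauses. A mandatory kite found at a separating triple, a separating quadruple, or in the kite case of $MC_4$ contributes $\alpha(\kappa)=(a_\kappa\wedge b_\kappa\wedge c_\kappa\wedge d_\kappa)$, that is, four unit clauses. By Lemma \ref{lem:2SATIC} every subformula arising from a small graph with $|G|\le 8$ is equivalent to a 2SAT formula, and I would substitute that equivalent form directly. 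Consequently the only clauses of $\eta^+$ that are not already $1$- or $2$-clauses come from the separating edges, where $\sigma(a,b,\mathcal{C}[a,b])$ has the form $(a_\kappa\wedge b_\kappa)\wedge\bigl((x^1_\kappa\wedge y^1_\kappa)\vee\cdots\vee(x^r_\kappa\wedge y^r_\kappa)\bigr)$ and, crucially, all of its literals are positive.

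With this classification the whole formula is $\Phi\wedge\bigwedge_j D_j$, where $\Phi$ gathers all unit and $2$-clauses and each $D_j=\bigvee_i(x^i_\kappa\wedge y^i_\kappa)$ is the purely positive disjunction of one separating edge. The reduction then proceeds by propagation. I would first solve $\Phi$ in linear time; the forced truth values, combined with the \IC-extension clauses, immediately eliminate any option $i$ of a $D_j$ whose vertex $x^i$ or $y^i$ is already committed to a different kite, so that $x^i_\kappa$ or $y^i_\kappa$ is forced \textsf{false}. A $D_j$ that loses all options witnesses unsatisfiability; a $D_j$ reduced to a unique surviving option is committed by adding its two unit clauses to $\Phi$, and the propagation is repeated. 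To keep the whole series linear I would run this incrementally, so that each literal of $\eta^+$ is scanned only a constant number of times over all rounds and the 2SAT work telescopes to $O(|\eta^+|)$.

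The step I expect to be the main obstacle is the possible coupling of distinct disjunctions: two separating edges may each retain an option that uses the same still-free vertex $w$, and because $w$ can belong to at most one kite these two options exclude one another, so the $D_j$ need not be independent after propagation. Here I would use the defining property of \IC-planarity, namely that distinct kites share no vertex: this shows that every coupling between two disjunctions is expressed entirely by \IC-extension $2$-clauses, so that the \emph{only} genuinely non-2SAT requirement left is that each $D_j$ contain at least one fully-true option. I would discharge this last requirement with the series-of-2SAT technique of \cite{bdeklm-IC-16}, testing the surviving options of each $D_j$ against the current $\Phi$ and committing them in an order consistent with the tree-like decomposition produced by $\mathcal{B}_{IC}$, so that no committed choice can block a later one whenever $\eta^+$ is satisfiable. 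Verifying exactly this non-blocking property is the technical heart of the proof; granting it, each variable is fixed at most once, each clause is examined a constant number of times, and the total running time is linear in $|\eta^+|$, as claimed.
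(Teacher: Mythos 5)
Your classification of the clause shapes is accurate, and you correctly isolate the only non-2SAT ingredients of $\eta^+$: the purely positive disjunctions $D_j=\bigvee_i (x^i_\kappa\wedge y^i_\kappa)$ contributed by separating edges. But the proof has a genuine gap exactly where you flag it: the ``non-blocking property'' of committing options in decomposition order is asserted (``granting it''), not proved, and it is not a formality. Unit propagation plus greedy commitment is unsound in general for formulas of the shape $\Phi\wedge\bigwedge_j D_j$ with $\Phi$ in 2SAT: a disjunction with two or more surviving options is not a 2-clause, and an arbitrary commitment in $D_1$ can kill the only surviving option of a later $D_2$ through \IC{}-extension clauses on a shared vertex, even when a different choice in $D_1$ would have succeeded. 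Your observation that all coupling between disjunctions is mediated by \IC{}-extension 2-clauses does not rescue the argument: the requirement ``at least one conjunctive pair per $D_j$'' is precisely the kind of constraint that 2SAT cannot express, so some structural property of the $D_j$ specific to this construction must be invoked, and your proposal never identifies one.

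The paper closes this gap with a geometric observation that your plan is missing. Order the crossable edges $f_1,\dots,f_r$ of a separating edge $\{a,b\}$ according to the rotation system at $a$. Then $a$, $b$, $x^1$ and $y^r$ are the only vertices of this pizza that can occur elsewhere in $\eta^+$; the vertices of the inner options $f_2,\dots,f_{r-1}$ occur in no other clause. Consequently each local subexpression $\eta^+(a,x^i,b,y^i)$, corresponding to the subgraph separated by the 4-cycle $(a,x^i,b,y^i)$, is 2SAT by Lemma~\ref{lem:2SATIC} (nested separating edges being handled by induction on the depth of the formula), is tested at most once, and $\sigma(a,b,\mathcal{C}[a,b])$ is then \emph{replaced}: by $(a_\kappa\wedge b_\kappa)$ if some inner option is satisfiable, since that choice is invisible to the rest of the formula; by $(a_\kappa\wedge b_\kappa\wedge x^1_\kappa)$, respectively $(a_\kappa\wedge b_\kappa)\wedge(x^1_\kappa\vee x^r_\kappa)$, if only one or both boundary options survive; or the algorithm reports unsatisfiability if none does. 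After these replacements the entire formula is 2SAT, so no commitment ordering or non-blocking argument is needed at all, and linearity follows because each subexpression is evaluated once and replaced by a constant-size formula. Without this inner/outer vertex analysis (or an equivalent substitute), your greedy scheme remains unjustified at its self-declared technical heart.
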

\begin{proof}
Consider the recursive construction of the boolean formula $\eta$ by
algorithm $\mathcal{B}_{IC}$. All subexpressions are in 2SAT form or
are replaced by an equivalent 2SAT formula as shown in Lemma
\ref{lem:2SATIC}, except if there is a separating edge. We proceed
by induction on the depth of the  formula. Consider the first
separating edge $\{a,b\}$ and the computed  formula $\sigma(a,b,
\mathcal{C}[a,b])$. As stated before, $\sigma(a,b,\mathcal{C}[a,b])$
can be transformed into $(a_{\kappa} \wedge b_{\kappa}) \wedge
((x^1_{\kappa} \wedge y^1_{\kappa}) \vee \ldots, \vee (x^r_{\kappa}
\wedge y^r_{\kappa}))$ with $f_i = \{x^i, y^i\}$ for $i=1,\ldots, r$
and $\mathcal{C}[a,b]) = (f_1, \ldots, f_r)$ in this order according
to the rotation system at $a$. Then $a, b, x_1, y_r$ are the outer
vertices, and $x_2, \ldots, x_r, y_1, \ldots, y_{r-1}$ are inner
vertices. We wish to use a pair of inner vertices (together with $a$
and $b$) for the kite. For $i=2, \ldots, r-1$ check the
satisfiability of the subexpression $\eta^+(a,x_i, b, y_i)$ of
$\eta^+$ that corresponds to the subgraph $H$ of $G$ that is
separated by the 4-cycle $C = (a, x_i, b, y_i)$. Since $H$ has no
separating edge, $\eta^+(a,x_i, b, y_i)$ is equivalent to a 2SAT
formula, whose satisfiability is checked in linear time. If
$\eta^+(a,x_i, b, y_i)$ is satisfiable for some $i$, then replace
$\sigma(a,b,\mathcal{C}[a,b])$ by $(a_{\kappa} \wedge b_{\kappa})$.
Otherwise, check the satisfiability of $\eta^+(a,x_1, b, y_1)$ and
of $\eta^+(a,x_r, b, y_r)$. If neither of them is satisfiable, then
$\eta^+$ is not satisfiable and the given graph $G$ is not
\IC{}-planar. If only one is satisfiable, say $\eta^+(a,x^1, b,
y^1)$, then replace $\sigma(a,b,\mathcal{C}[a,b])$ by $(a_{\kappa}
\wedge b_{\kappa} \wedge x^1_{\,\kappa})$, and replace
$\sigma(a,b,\mathcal{C}[a,b])$ by $(a_{\kappa} \wedge b_{\kappa})
\wedge (x^1_{\,\kappa} \vee x^r_{\,\kappa})$. Each subexpression
$\eta^+(a,x_i, b, y_i)$ is evaluated at most once, namely for the
satisfiability check of $\sigma(a,b,\mathcal{C}[a,b])$, and
$\sigma(a,b,\mathcal{C}[a,b])$ is replaced by a short formula in the
further computation of the satisfiability check of $\eta^+$.
%
\end{proof}

A 1-planar graph of size $n$ has at most $n-2$ kites, and the bound
is achieved by optimal 1-planar graphs with $4n-8$ edges
\cite{bsw-1og-84, s-s1pg-86}. Hence, the boolean formula $\eta$ has
length $O(n)$. However, the \IC{}-extension may add up to $O(n^2)$
subexpressions of the form     $(\neg x_{\kappa}  \vee \neg
x_{\kappa'})$, for example, if $x$ is the center of a star of $K_5$,
as illustrated in Fig.~\ref{fig:multiK5}.

\begin{figure}
    \begin{center}
      \includegraphics[scale=0.4]{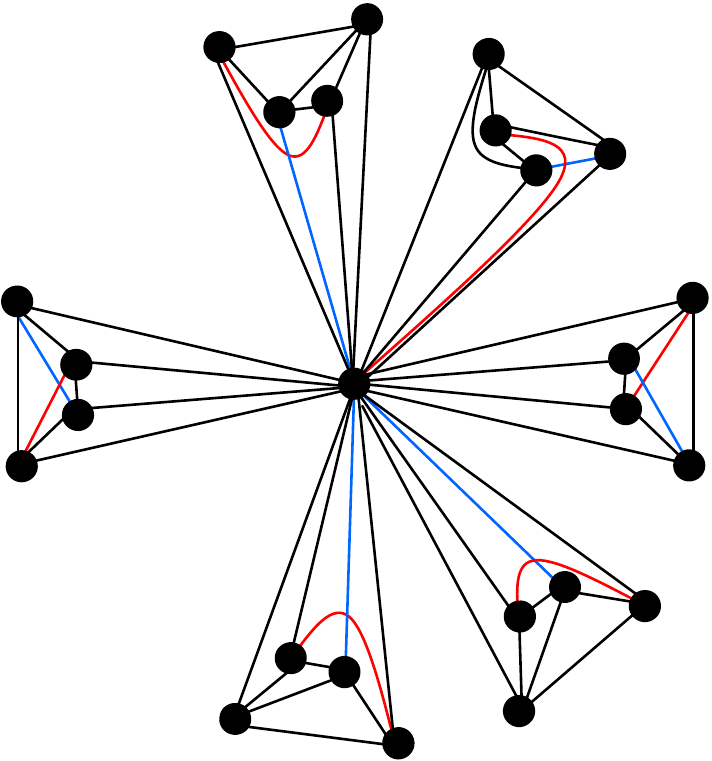}
      \caption{A graph with many $K_5$ subgraphs inducing many \IC{}-extensions }
      \label{fig:multiK5}
    \end{center}
 \end{figure}

In consequence, the satisfiability check of $\eta^+$ takes linear
time in the length of the formula and at most quadratic time in the
size of the input graph. This is less than the cubic running time of
algorithm $\mathcal{B}_{IC}$.

In summary, we obtain:

\begin{theorem} \label{thm:ICmain}
Triangulated \IC{}-planar graphs can be recognized in cubic time.
\end{theorem}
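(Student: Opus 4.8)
The plan is to assemble the pieces already established into a single cubic-time procedure, so the proof is essentially a bookkeeping of running times. First I would run algorithm $\mathcal{B}_{IC}$ on the input graph $G$. Since $\mathcal{B}_{IC}$ specializes the cubic-time algorithm \CGP{} of \cite{cgp-rh4mg-06} and inherits its running time, this stage costs $O(n^3)$ and either stops with \textsf{false} or returns an edge coloring together with the boolean formula $\eta$. By Theorem \ref{thm:charICplanar}, $G$ is triangulated \IC{}-planar if and only if $\mathcal{B}_{IC}$ succeeds \emph{and} the \IC{}-extension $\eta^+$ is satisfiable. Thus the whole recognition decision reduces to the outcome of $\mathcal{B}_{IC}$ together with a single satisfiability test of $\eta^+$.

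The second step is to bound the cost of that satisfiability test, for which I would invoke Lemma \ref{lem:evalIC}: the satisfiability of $\eta^+$ can be decided in time linear in the length of $\eta^+$. It then remains to bound $|\eta^+|$. The formula $\eta$ collects one clause $\alpha(\kappa)$ per candidate kite, and a triangulated 1-planar (hence \IC{}-planar) graph has at most $n-2$ kites \cite{bsw-1og-84, s-s1pg-86}, so $\eta$ has length $O(n)$. The \IC{}-extension adds one exclusion clause $(\neg x_{\kappa} \vee \neg x_{\kappa'})$ for each vertex $x$ shared by two candidate kites $\kappa \neq \kappa'$; in the worst case, such as the star of $K_5$ subgraphs in Fig.~\ref{fig:multiK5}, this contributes $O(n^2)$ clauses. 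Hence $|\eta^+| = O(n^2)$ and the satisfiability test runs in $O(n^2)$ time. Combining the two stages gives $O(n^3) + O(n^2) = O(n^3)$, which is the claimed bound.

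The hard part is not the running time of $\mathcal{B}_{IC}$ itself — that is inherited directly from \CGP{} — but ensuring the satisfiability test does not blow up, since $\eta^+$ is a general CNF formula and \textsc{sat} is \NP-hard in general. The work here is already carried out by the preceding lemmas, and I would simply point to them: Lemma \ref{lem:2SATIC} shows that every subexpression arising from a small graph or from a separating triple, quadruple, or triangle is equivalent to a 2SAT formula, while Lemma \ref{lem:evalIC} shows that the only remaining disjunctions, namely those produced at separating edges through $\sigma(a,b,\mathcal{C}[a,b])$, can be resolved by a nested sequence of 2SAT checks in which each subexpression $\eta^+(a,x_i,b,y_i)$ is evaluated at most once and $\sigma(a,b,\mathcal{C}[a,b])$ is thereafter replaced by a short formula. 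This nested evaluation is exactly what keeps the satisfiability test linear in $|\eta^+|$, hence quadratic in $n$, which is strictly below the cubic cost of $\mathcal{B}_{IC}$ and therefore does not dominate the total. With these facts in hand the theorem follows immediately.
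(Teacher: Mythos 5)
Your proposal is correct and takes essentially the same route as the paper: run $\mathcal{B}_{IC}$ in $O(n^3)$ time inherited from \CGP{}, use Theorem~\ref{thm:charICplanar} for correctness of the reduction to satisfiability of $\eta^+$, and bound the satisfiability test via Lemma~\ref{lem:evalIC} combined with the $O(n)$ bound on the number of kites and the $O(n^2)$ bound on the clauses added by the \IC{}-extension. Your accounting ($O(n^3)+O(n^2)=O(n^3)$) and your appeal to Lemmas~\ref{lem:2SATIC} and~\ref{lem:evalIC} match the paper's argument exactly.
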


\subsection{Evaluate \NIC{}-planar formulas}


Algorithm $\mathcal{B}_{NIC}$ extends algorithm $\mathcal{B}$ and
stops with a failure if  $MC_4$    detects an $SC$-graph. In
addition, it computes all \NIC{}-planar embeddings of small graphs
and expresses them by a boolean formula $\eta$. A boolean variable
has the form $e_{\kappa}$ for some edge $e$ and a kite $\kappa$
containing $e$ as a planar edge. If $G$ is a small graph that
results from a separating 3- or 4-cycle $C$, then only the edges of
$C$ are taken into account. An inner edge of a $K_5$ occurs in
 three kites and does not satisfy the next lemma.

\begin{lemma} \label{lem:NIC2occurrences}
For every edge $e$ with an occurrence of a variable  $e_{\kappa}$ in
$\eta$ there are at most two variables $e_{\kappa}$ and
$e_{\kappa'}$ in $\eta$.
\end{lemma}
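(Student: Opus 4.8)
The plan is to trace every location in algorithm $\mathcal{B}_{NIC}$ where a variable of the form $e_{\kappa}$ is introduced into $\eta$, and to argue that each such edge $e$ can be the \emph{planar edge of a kite} in at most two of the clauses $\alpha(\kappa)$ that are ever added. The key structural fact I would exploit is that in a triangulated $1$-planar embedding each planar edge $e$ bounds exactly two faces; since a kite covering $e$ must be glued along one side of $e$, the edge $e$ can serve as a planar kite-edge on at most its two incident sides, giving the bound of two. So the heart of the argument is geometric: an edge has two sides, hence at most two kites can rest on it as a planar boundary.

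First I would enumerate the insertion points, following the structure already laid out in algorithms $\mathcal{B}$ and $MC_4$. A clause $\alpha(\kappa)$ is added at a separating triple, a separating quadruple, a separating triangle (two clauses), inside $MC_4$ for an $SC$-graph or a kite, and inside \texttt{final-check} for small graphs. In each of the first group of cases the edges that become variables are precisely the four \emph{planar} edges of the detected kite $\kappa$ (the two black half-boundaries on each side), and by Lemma \ref{propertiesA} these edges are planar in every $1$-planar embedding at the moment of detection. The crucial observation from the search order is that these gadgets consume the crossing, remove one red edge, and the four planar edges of $\kappa$ are colored black; a black edge is planar in \emph{every} triangulated embedding, so once an edge plays the role of a planar kite-edge it is thereafter fixed. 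I would then argue that the only way a single edge $e$ could acquire \emph{two} distinct variables $e_{\kappa}, e_{\kappa'}$ is if $e$ is simultaneously a planar side of two kites lying on opposite sides of $e$ — exactly the $\NIC{}$-planar configuration where two kites share the single edge $e$. A third such kite would require a third face incident to $e$, which is impossible in a plane embedding.

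The main obstacle I expect is the small-graph case handled by \texttt{final-check}, where several embeddings are combined by disjunction and where $K_5$ subgraphs appear. As the excerpt already notes, an \emph{inner} edge of a $K_5$ (the edge $h$ between the two inner vertices) lies in all three candidate kites and would violate the lemma — this is precisely why the algorithm \emph{ignores} $h$ and builds the $\NIC{}$ formula only from variables for the edges of the outer face. So the delicate step is verifying that the bookkeeping in \texttt{final-check} never emits a variable for an internal edge: I would check that each disjunct $\alpha$ uses only outer-face edge variables, and that for the eight-vertex two-$K_5$ gadget the diagonal $\{b,c\}$ is counted for at most the two kites that can sit on its two sides. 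Once internal edges are excluded by construction, every surviving edge is an edge of some separating cycle or outer face, has two sides in the plane, and therefore receives at most two variables, completing the proof.
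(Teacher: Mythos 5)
Your overall skeleton matches the paper's proof---enumerate the places where $\mathcal{B}_{NIC}$ adds a clause, note that each $K_4$ is examined exactly once, and treat the small graphs in final-check as the critical case---but the load-bearing step of your argument is unsound. You justify the bound of two by a geometric principle: a planar edge has two sides, so at most two kites can rest on it. That principle only bounds kites that coexist in a \emph{single} embedding. The variables of $\eta$, however, range over \emph{candidate} kites arising from mutually exclusive alternative embeddings, combined by disjunction, and nothing geometric prevents three candidate kites in three different embeddings from all containing $e$ as a planar edge, even on the same side of $e$. Your own example shows this: the inner edge $h$ of a $K_5$ is a planar edge of all three candidate kites (and the spoke edges of a $K_5$ lie in two candidate kites on one side). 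So the lemma is not a fact about embeddings; it is a syntactic property of the formula the algorithm \emph{chooses} to build, and the paper proves it by exhaustively enumerating the small graphs of size five to eight with a $3$- or $4$-cycle boundary and verifying that every constructed formula uses a variable $e_{\kappa}$ for each outer-cycle edge $e$ at most once and gives internal edges no variable at all. Your plan to ``check that each disjunct uses only outer-face edge variables'' is necessary but not sufficient: you must also rule out that two different disjuncts (two alternative embeddings) attach two different kite variables to the \emph{same} outer edge, and the two-sides argument cannot do that, since both kites would lie on the inner side of the separating cycle.

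The same flaw surfaces concretely in your treatment of the eight-vertex gadget: you claim the diagonal $\{b,c\}$ is ``counted for at most the two kites that can sit on its two sides,'' but in the paper's construction the diagonal is an added chord that receives \emph{no} variable at all, exactly like $h$; the formula is $(e^1_{\kappa_1} \vee e^2_{\kappa_2} \vee e^3_{\kappa_3} \vee e^4_{\kappa_4})$ over the four outer edges only. If the diagonal did receive variables, it could receive one per candidate kite, and your bound would fail. Finally, your enumeration of insertion points omits the separating-edge step, which adds $\sigma(a,b,\mathcal{C}[a,b])$, a disjunction of $\alpha(\kappa_i)$ over all candidate kites; there one must observe that the crossable edges are pairwise disjoint, so the candidate kites share only the crossing edge $\{a,b\}$, which carries no variable, and each edge gets at most one variable from this step. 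In short, the correct proof is the paper's case-by-case bookkeeping of which edges are granted variables; the geometric shortcut does not substitute for it.
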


\begin{proof}
First, observe that each  variable $e_{\kappa}$ occurs once in $\eta$
since algorithms \CGP{} and  $\mathcal{B}_{NIC}$ check each $K_4$
exactly once and $\mathcal{B}_{NIC}$ introduces a boolean variable
 $e_{\kappa}$ if edge $e$ is a planar edge of candidate kite $\kappa$,
which is a unique event.

Second, all steps of algorithm $\mathcal{B}_{NIC}$ introduce at most two variables
$e_{\kappa}$ and $e_{\kappa'}$ for two candidate kites $\kappa$ and $\kappa'$,
except if $G$ is a small graph of size at most eight. Let $C$ be the
edges of the outer cycle of $G$. Then $C$ is a 3-cycle or a
$4$-cycle and $G$ and $C$ are obtained from the separating 3-cycle
and 4-cycle steps of the algorithm, except if the input graph is
small, which is checked by exhaustive search.
Suppose that the edges of $C$ are colored black. Otherwise,
 the possible \NIC{}-planar
embeddings of $G$ are more restricted.

Assume that $G$ must be embedded with at least one kite. First,
suppose that $C$ is a 3-cycle with edges $e^1, e^2$ and $ e^3$. If
$G = K_5$, then exactly one edge $e$ of $C$ is part of a kite, as
illustrated in Fig.~\ref{fig:K5fixedface}, and all \NIC{}-planar
embeddings are expressed by the boolean formula $\alpha =
e^1_{\kappa_1} \vee e^2_{\kappa_2} \vee e^3_{\kappa_3}$.
 A graph $G$
of size six with outer cycle $C$ is \NIC{}-planar if there is a
fixed kite including one edge of $C$ or there is a $K_5$ with one of
two edges of $C$. Hence, $\alpha = e_{\kappa}$ or $\alpha =
e_{\kappa}  \vee e'_{\kappa'}$ for edges $e, e' \in \{e^1, e^2,
e^3\}$. If $G$ has size seven, then a \NIC{}-planar embedding  has
up to two kites. At least one  kite is fixed and the other is part
of  a $K_5$, see Fig.~\ref{fig:G7}.  The possible \NIC{}-planar
embeddings are expressed by $\alpha = e(\kappa) \wedge e'(\kappa')$
if both $K_4$ are fixed and by   $\alpha = e_1(\kappa_1) \wedge
(e_2(\kappa_2) \vee e_3(\kappa_3))$, otherwise. Similarly, if there
is only one kite that needs an edge of $C$, then $\alpha =
e(\kappa)$ or $\alpha =  e(\kappa) \vee e'(\kappa')$ express  these
embeddings.
Finally, if $G$ has size eight, then it has up to two fixed $K_4$ or
two outer edges may belong to a $K_5$ and the other edge to a fixed
$K_4$ such that $\alpha = e^1(\kappa_1) \wedge e^2(\kappa_2)$ or
$\alpha = e^1(\kappa_1) \wedge (e^2(\kappa_2) \vee e^3(\kappa_3))$
or $\alpha$ is a subexpression thereof.

Similarly, if $C$ is a 4-cycle with edges $e^1, e^2, e^3, e^4$, then
a graph of size eight can host two $K_5$ and at most two kites. Each
$K_5$ indices a kite and one of the two $K_5$ shall form a kite
using only inner edges including the diagonal, see
Fig.~\ref{fig:G8}. These \NIC{}-planar embeddings are expressed by
$\alpha = (e^1(\kappa_1) \vee e^2(\kappa_2) \vee e^3(\kappa_3)\vee
e^4(\kappa_4))$. Clearly both $K_4$ can be fixed using up to three
edges of $C$. Each option is  expressed by a boolean formula that
uses a variable $e_{\kappa}$ for each edge $e$ of $C$  at most once.
%
\end{proof}

\begin{figure}
  \centering
  \subfigure[] {
     \includegraphics[scale=0.33]{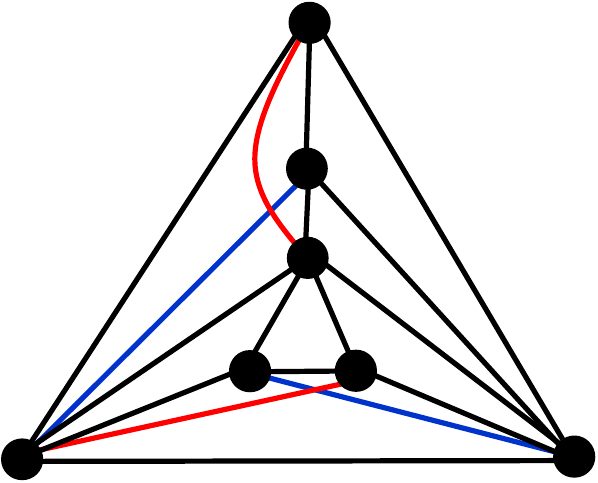}
    \label{fig:G7a}
  }
  \hfil
  \subfigure[] {
      \includegraphics[scale=0.33]{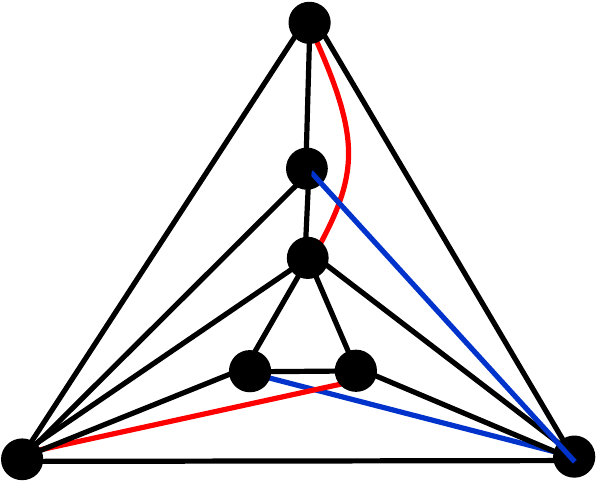}
      \label{fig:G7b}
  }
  \caption{\NIC{}-planar embeddings of a small graph with 7 vertices}
  \label{fig:G7}
\end{figure}

\begin{figure}
  \centering
  \subfigure[] {
     \includegraphics[scale=0.33]{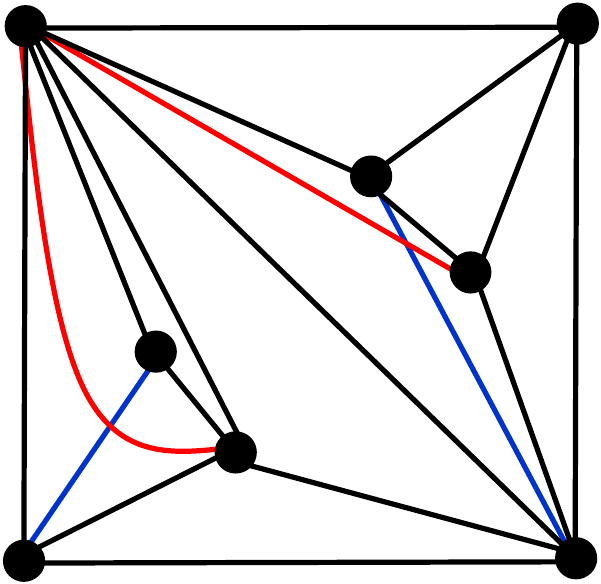}
    \label{fig:G8a}
  }
  \hfil
  \subfigure[] {
      \includegraphics[scale=0.33]{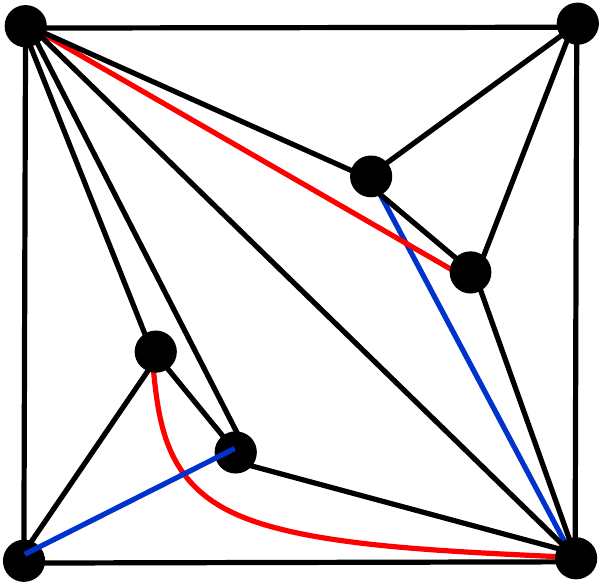}
      \label{fig:G8b}
  }
  \caption{\NIC{}-planar embeddings of a small graph with 8 vertices, a planar 4-cycle and two inner $K_5$}
  \label{fig:G8}
\end{figure}

\begin{lemma} \label{lem:NICcheck}
  The satisfiability of $\eta^+$ can be checked in linear time.
\end{lemma}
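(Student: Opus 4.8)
The plan is to exploit the special structure of $\eta$ that was established for the \NIC{}-planar case, just as Lemma \ref{lem:evalIC} exploited the $2$SAT structure for \IC{}-planarity. The key fact is Lemma \ref{lem:NIC2occurrences}: every edge variable $e_{\kappa}$ occurs in at most two clauses of $\eta$, namely as $e_{\kappa}$ and $e_{\kappa'}$ for two distinct candidate kites $\kappa, \kappa'$ sharing the planar edge $e$. This bounds the number of \NIC{}-extension clauses $(\neg e_{\kappa} \vee \neg e_{\kappa'})$ linearly in the length of $\eta$, so the total length of $\eta^+$ remains $O(n)$, unlike the \IC{}-case where a single vertex could lie in many kites and force a quadratic blow-up.

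First I would analyze the shape of $\eta$ clause by clause. As in the \IC{}-case, the conjunctions introduced at separating $3$- and $4$-cycles merely glue together independent subformulas, so it suffices to argue that each constituent piece, together with its \NIC{}-extension, is equivalent to a $2$SAT formula (or can be replaced by one). The clauses $\alpha(\kappa) = (e^1_{\kappa} \wedge \cdots \wedge e^4_{\kappa})$ from separating triples, quadruples, and plain kites in $MC_4$ are conjunctions of literals and cause no difficulty. The disjunctive clauses arise only at the small graphs of size at most eight, and Lemma \ref{lem:NIC2occurrences} already enumerated their exact forms: they are short disjunctions such as $e^1_{\kappa_1} \vee e^2_{\kappa_2} \vee e^3_{\kappa_3}$ (for $K_5$) or the size-eight variants with a $4$-cycle. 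I would show that each such disjunction, after adding the mutual-exclusion clauses of the \NIC{}-extension, is equivalent to a $2$SAT formula — indeed each disjunct is a single literal (since the internal diagonal edge $h$ is deliberately omitted from the formula), so these clauses are already in $2$SAT form with at most a constant number of literals each.

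The remaining obstacle, and the one I expect to be the main technical point, is the separating-edge formula $\sigma(a,b,\mathcal{C}[a,b]) = (\alpha_{\kappa_1} \vee \cdots \vee \alpha_{\kappa_r})$, which is a disjunction of $4$-literal conjunctions and is genuinely non-$2$SAT on its face. Here I would mirror the strategy of Lemma \ref{lem:evalIC}: process separating edges by induction on the recursion depth, and for each candidate crossing edge $f_i = \{x^i, y^i\}$ evaluate the satisfiability of the subexpression $\eta^+$ corresponding to the subgraph separated off by the $4$-cycle $C = (a, x_i, b, y_i)$. Since that subgraph contains no separating edge, its formula is equivalent to $2$SAT by the small-graph analysis, and its satisfiability is testable in linear time. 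Once a satisfiable choice is found, $\sigma$ is replaced by a short constant-size formula, so each subexpression is touched at most once and the whole evaluation is linear.

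Finally I would account for the running time. Each separating-edge subexpression $\eta^+(a,x_i,b,y_i)$ is a $2$SAT instance checked in time linear in its own length, and because the \NIC{}-extension adds only $O(1)$ exclusion clauses per edge (by Lemma \ref{lem:NIC2occurrences}) the total length of $\eta^+$ is $O(n)$. Summing the linear-time $2$SAT checks over the recursion tree, and using that each subexpression is evaluated exactly once before being collapsed to a short formula, the overall satisfiability test of $\eta^+$ runs in linear time, which completes the proof.
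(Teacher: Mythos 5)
Your proposal correctly isolates Lemma~\ref{lem:NIC2occurrences} as the structural fact that makes the \NIC{} case easy (each edge variable occurs at most twice, so $\eta^+$ has length $O(n)$, unlike the possibly quadratic \IC{}-extension), but the evaluation argument you build on it has a genuine gap: you transplant the $2$SAT machinery of Lemmas~\ref{lem:2SATIC} and~\ref{lem:evalIC}, and neither transfer goes through as stated. For the small graphs, the clauses enumerated in Lemma~\ref{lem:NIC2occurrences} have single-literal disjuncts but up to four of them, e.g.\ $(e^1_{\kappa_1} \vee e^2_{\kappa_2} \vee e^3_{\kappa_3} \vee e^4_{\kappa_4})$ in the size-eight case with a separating $4$-cycle; a width-$4$ clause is not a $2$SAT clause, so your assertion that these are ``already in $2$SAT form'' is false, and nothing in your write-up replaces them by equivalent $2$SAT formulas. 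For separating edges the situation is worse: the factorization $\sigma = (a_{\kappa} \wedge b_{\kappa}) \wedge ((x^1_{\kappa} \wedge y^1_{\kappa}) \vee \cdots)$ that drives Lemma~\ref{lem:evalIC} exists only because \IC{} variables are attached to \emph{vertices}, so that $a$ and $b$ are common to every candidate kite. In the \NIC{} case the variables are the four planar \emph{edges} $\{a,x_i\}, \{x_i,b\}, \{b,y_i\}, \{a,y_i\}$ of $\kappa_i$, which change from kite to kite; $\sigma$ has no common factor to pull out, and the inner/outer-candidate replacement step of Lemma~\ref{lem:evalIC} has no analogue for this formula, so the induction on recursion depth you invoke is not defined for the formula you actually have.

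The paper's proof avoids all of this with a direct substitution that your proposal misses. Since each edge $e$ contributes at most two variables $e(\kappa)$ and $e(\kappa')$, one sets $e(\kappa) = \textsf{true}$ when the occurrence is unique, and otherwise keeps the first occurrence and replaces the second occurrence $e(\kappa')$ by $\neg e(\kappa)$; this makes the extension clause $(\neg e(\kappa) \vee \neg e(\kappa'))$ automatic, so all \NIC{}-extension clauses can be dropped, and the two-occurrence bound guarantees the transformation preserves equivalence. In the resulting formula $\eta^*$, satisfiability reduces to checking that no clause contains a complementary pair $(x, \neg x)$, a single linear scan with constant work per simplification --- no $2$SAT solving and no recursion over separating edges is needed, and the wide clauses that break your argument are handled uniformly. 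So while your first paragraph (bounded occurrences, hence $O(n)$ length of $\eta^+$) is the right starting point and matches the paper, the evaluation mechanism you propose would fail on exactly the two clause types you single out, and the repair is the substitution argument rather than a $2$SAT reduction.
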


\begin{proof}
Consider the formulas $\eta$ and $\eta^+$ that are obtained by
algorithm $\mathcal{B}_{NIC}$ and the \NIC{}-extension.

Set a variable $e(\kappa) = \textsf{true}$ if there is a single
variable $e(\kappa)$ for some edge $e$ in $\eta$. Then there is no
\NIC-extension. Simplify $\eta$ and $\eta^+$ accordingly. Otherwise,
there are  two variables $e(\kappa)$ and $e(\kappa')$ with $\kappa
\neq \kappa'$ in $\eta$ and there is the \NIC{}-extension $\neg
e(\kappa) \vee \neg e(\kappa')$. Keep the first occurrence
$e(\kappa)$ in $\eta$ and replace the second occurrence $e(\kappa')$
by $\neg e(\kappa)$ and omit the \NIC{}-extensions. Since each edge
$e$ occurs in  two variables $e(\kappa)$ and $e(\kappa')$, this
transformation preserves the equivalence of the boolean formulas.
Let $\eta^*$ denote the resulting formula. Then $\eta^*$ is
satisfiable if and only if there is no complementary pair $(x, \neg
x)$ in some clause, which can be checked in linear time.  Each
simplification of $\eta$ can be done in constant time. Hence, the
satisfiability test of $\eta, \eta^+$ and $\eta^*$ takes   linear
time.
%
\end{proof}

As before, there are at most $O(n)$ kites such that the length of
$\eta$ is linear in the size of the given graph $G$. Each edge $e$
belongs to at most two kites. Thus, the \NIC{}-extensions adds at
most $O(n)$ 2SAT clauses. By Lemma \ref{lem:NICcheck}   the
satisfiability check of $\eta^+$ takes linear time in the size of
the graph. This is dominated by the running time of algorithm
$\mathcal{B}_{NIC}$, and we can conclude:

\begin{theorem} \label{thm:NICmainresult}
Triangulated \NIC{}-planar graphs can be recognized in cubic time.
\end{theorem}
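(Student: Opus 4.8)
The plan is to assemble the cubic bound from the correctness characterization of Theorem~\ref{thm:charICplanar} together with the two complexity facts already established, observing that recognition amounts to running $\mathcal{B}_{NIC}$ and then testing satisfiability of the resulting extended formula $\eta^+$.

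First I would invoke Theorem~\ref{thm:charICplanar}: $G$ is triangulated \NIC{}-planar if and only if $\mathcal{B}_{NIC}$ succeeds and the \NIC{}-extension $\eta^+$ of its output formula $\eta$ is satisfiable. Recognition therefore reduces to (i) running $\mathcal{B}_{NIC}(G)$ and returning \textsf{false} if it aborts, and (ii) deciding satisfiability of $\eta^+$ otherwise. Since $\mathcal{B}_{NIC}$ only specializes $\mathcal{B}$, which in turn is $\CGP{}$ augmented by an edge coloring and a stepwise conjunction of constant-size clauses, and $\CGP{}$ runs in cubic time, step~(i) costs cubic time; the only new work consists of the immediate-failure tests at a separating triangle and at an $SC$-graph in $MC_4$, which piggyback on the existing gadget scan at no asymptotic overhead.

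For step~(ii) I would bound the size of the formula. A triangulated 1-planar graph of size $n$ has at most $n-2$ kites, so $\eta$ has length $O(n)$. By Lemma~\ref{lem:NIC2occurrences} each edge lies in at most two kites, hence the \NIC{}-extension contributes at most $O(n)$ binary clauses and $\eta^+$ likewise has length $O(n)$. Lemma~\ref{lem:NICcheck} then supplies a satisfiability test linear in the length of $\eta^+$, so step~(ii) runs in $O(n)$ time, which is dominated by the cubic running time of step~(i). Combining the two parts yields a cubic-time recognition procedure, which is the claim.

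The main obstacle is not the running time of the scan, which is inherited from $\CGP{}$, but ensuring the CNF satisfiability problem for $\eta^+$ does not blow up; that is precisely what Lemmas~\ref{lem:NIC2occurrences} and~\ref{lem:NICcheck} secure. The two-kites-per-edge bound keeps the number of \NIC{}-extension clauses linear, and the edge-variable structure lets the second occurrence $e(\kappa')$ of each repeated edge be replaced by $\neg e(\kappa)$, collapsing $\eta^+$ to an equivalent formula $\eta^*$ whose satisfiability reduces to scanning for a complementary literal pair within a clause in linear time. With these structural facts in hand the conclusion is immediate, and the argument parallels the \IC{}-planar case treated in Theorem~\ref{thm:ICmain}.
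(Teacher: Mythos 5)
Your proposal is correct and follows essentially the same route as the paper: correctness via Theorem~\ref{thm:charICplanar}, cubic time for $\mathcal{B}_{NIC}$ inherited from \CGP{}, and a linear-length $\eta^+$ (via the $O(n)$ kite bound and the two-occurrences-per-edge property of Lemma~\ref{lem:NIC2occurrences}) whose satisfiability is decided in linear time by Lemma~\ref{lem:NICcheck}. This matches the paper's own assembly of the result, so nothing further is needed.
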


\subsection{Embeddings}

We can generalize Thms. \ref{thm:ICmain} and \ref{thm:NICmainresult}
to  graphs $G$ whose 3-connected components are triangulated
\IC{}-planar and \NIC{}-planar. Consider a separation pair $\{u,v\}$
of $G$ and components $H_1,\ldots, H_r$ for some $r > 1$ which each
contain the vertices $u$ and $v$ and the edge $e = \{u,v\}$. Then
$G$ is \IC{}-planar  if and only if each $H_i$ for $i=1,\ldots,r$ is
\IC{}-planar and each of $u$ and $v$  occurs in at most one kite.
This is checked as follows: Let algorithm $\mathcal{B}_{IC}$ return
the boolean formula $\eta_i$ on $H_i$. If $\eta_i$ has a variable
$x(\kappa)$ for $x \in \{u,v\}$ and some kite $\kappa$, then set
$x(\kappa) = \textsf{false}$. If thereafter the \IC{}-extension
$\eta_i^+$ is not satisfiable, then vertex $x$
  is needed in $H_i$. Graph $G$ is not \IC{}-planar if
there are at least two such $H_i$. Accordingly, if $v$ is an
articulation vertex with components $J_1, \ldots, J_s$ for some $s >
1$, then check each component $J_i+v$ and check that $v$ is in a
kite of at most one component. Clearly, a graph is \IC{}-planar if
so are its disconnected components.

Similarly, $G$ is \NIC{}-planar  if and only if each 2-connected
component $H_i$   is \NIC{}-planar and edge $e$ occurs in at most
one kite, which is checked as before.

Note that the decomposition of a graph at its separation pairs
corresponds to the introduction of holes in maps \cite{b-4m1pg-15}.
We can summarize:

\begin{theorem} \label{thm:2connected}
  There is a cubic-time algorithm to test whether a graph is
  \IC{}-planar  (\NIC{}-planar) if each 3-connected component is triangulated
  1-planar.
\end{theorem}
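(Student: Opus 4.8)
The plan is to reduce the problem to the already-settled triangulated case via the standard decomposition of a graph into its $3$-connected components, and to account for the remaining global restrictions by a bottom-up pass over the decomposition tree. First I would compute the block-cut tree and the SPQR tree of $G$, so that $G$ is split at its cut vertices into blocks and each block at its separation pairs $\{u,v\}$ into $3$-connected components $H_1, H_2, \ldots$, each carrying the virtual edge $e=\Edge{u}{v}$. By hypothesis every such $H_i$ is triangulated $1$-planar, so by Theorem~\ref{thm:charICplanar} algorithm $\mathcal{B}_{IC}$ (resp.\ $\mathcal{B}_{NIC}$) succeeds on $H_i$ and returns a formula $\eta_i$ whose extension $\eta_i^+$ is satisfiable exactly for the \IC{}- (\NIC{}-)planar embeddings of $H_i$.

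The heart of the argument is a gluing lemma: an \IC{}- (\NIC{}-)planar embedding of $G$ restricts to one of each $H_i$, and conversely embeddings of the $H_i$ reassemble into one of $G$, because at a separation pair the several sides occupy topologically disjoint regions delimited by $u$, $v$ and $e$, so crossings in distinct components never interact. Hence the only genuinely global conditions are the defining restrictions applied across components: for \IC{}, each of $u$ and $v$ may lie in at most one kite over all $H_i$ that share it (and at a cut vertex $v$, the same for $v$ across the blocks $J_1,\ldots,J_s$); for \NIC{}, only the shared edge $e$ is common to several components, so $e$ may lie in at most one kite. I would prove both directions by pasting the local drawings guaranteed by the satisfying assignments into the rotations at $u$ and $v$ along the common edge $e$, checking that the result is triangulated and that the kite restriction is precisely the stated counting condition.

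Next I would turn the counting condition into an efficient test on the $\eta_i^+$. For a shared object $z$ — a vertex $u$ or $v$ for \IC{}, the edge $e$ for \NIC{} — and a component $H_i$, I set to \textsf{false} all variables $z_\kappa$ occurring in $\eta_i$ and test whether the resulting extension stays satisfiable; by Lemmas~\ref{lem:evalIC} and~\ref{lem:NICcheck} this costs time linear in $|\eta_i^+|$. If it becomes unsatisfiable, then $z$ is \emph{forced} into a kite in $H_i$. The intended criterion is that $G$ is \IC{}-planar (\NIC{}-planar) iff every component is and no shared object is forced by two or more of the components containing it. For \NIC{} this is immediate, since there is a single shared edge per separation pair and a non-forced component admits an embedding that keeps $e$ out of every kite; one then realizes the single allowed kite on $e$ in the unique forcing component, if any, and avoids it elsewhere. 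For the running time, decomposing $G$ is linear, a call of $\mathcal{B}_{IC}$ on a component of size $n_i$ costs $O(n_i^3)$, and since the $3$-connected components of a sparse ($O(n)$-edge) graph satisfy $\sum_i n_i = O(n)$, we obtain $\sum_i O(n_i^3) \le O\bigl((\sum_i n_i)^3\bigr) = O(n^3)$, with the forcing tests and the bottom-up combination adding only linear overhead per tree node.

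I expect the main obstacle to be the \IC{} bookkeeping at a separation pair. Unlike \NIC{}, where the single shared edge decouples the sides, here both $u$ and $v$ are shared, so the local constraints relating $u$ and $v$ inside a component may be correlated, and one must show that the per-object forcing test is not only necessary but also sufficient for the simultaneous conditions ``$u$ in at most one kite'' and ``$v$ in at most one kite''. In particular the delicate point is to rule out — or else to handle by a small explicit feasibility check in the bottom-up pass — components that can place $u$ or $v$, but not neither, in a kite, since several such components could in principle make the global condition infeasible even when no single vertex is forced twice. Establishing that such configurations do not arise (or are correctly absorbed) for triangulated $1$-planar components, together with the topological independence used in the gluing lemma, is where the real work lies.
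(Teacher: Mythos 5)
Your proposal follows essentially the same route as the paper: decompose $G$ at cut vertices and separation pairs into components $H_i$ that each carry the edge $e=\Edge{u}{v}$, run $\mathcal{B}_{IC}$ (resp.\ $\mathcal{B}_{NIC}$) on each component via Theorem~\ref{thm:charICplanar}, detect whether a shared vertex (edge) is \emph{forced} into a kite in $H_i$ by setting its variables to \textsf{false} and re-testing satisfiability of $\eta_i^+$ using Lemmas~\ref{lem:evalIC} and~\ref{lem:NICcheck}, and reject when the same shared object is forced in two or more components; the cubic bound likewise comes from summing the per-component costs. One comparative point is worth recording: the obstacle you flag in the \IC{} case --- a component that can keep $u$ out of every kite and can keep $v$ out of every kite, but not both simultaneously, so that three such components at one separation pair defeat the per-object forcing test by pigeonhole --- is not addressed by the paper either. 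The paper's proof of Theorem~\ref{thm:2connected} is a single paragraph asserting the equivalence ``$G$ is \IC{}-planar iff each $H_i$ is \IC{}-planar and each of $u$ and $v$ occurs in at most one kite'' together with the forcing test, with no gluing lemma and no argument excluding correlated constraints on $u$ and $v$; it even leaves ambiguous whether the variables for $u$ and $v$ are falsified separately or jointly, and the joint reading over-rejects the benign case where two demanding components can be served by assigning $u$ to one kite and $v$ to another. So you have not missed a step that the paper supplies; your closing paragraph pinpoints precisely the soft spot in the paper's own argument, and the explicit constant-size feasibility check you suggest at each separation pair (matching the ``$u$ or $v$'' demands of the components to the two vertices) is the natural repair in both treatments. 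The \NIC{} case, where only the single edge $e$ is shared, is unproblematic in both, exactly as you observe.
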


The algorithms $\mathcal{B}$, $\mathcal{B}_{IC}$ and
$\mathcal{B}_{IC}$  color  the edges of a triangulated 1-planar
graph such that  black edges are always embedded planar, red, blue
and orange edges are  crossing edges, and the status of cyan and
grey edges is open. Nevertheless, the algorithms cannot determine an
embedding. There are
 triangulated \IC{}-planar and \NIC{}-planar graphs with exponentially many embeddings, as
illustrated in Figs. \ref{fig:IC-ambig} and \ref{fig:NIC-ambig}.
These graphs are maximal in their class. If $\mathcal{B}$ is applied
to these graphs, it finds separating 3-cycles and many small graphs
with a $K_5$, which each allow for at least two embeddings.

\begin{figure}
   \begin{center}
     \rotatebox{270}{\includegraphics[scale=0.4]{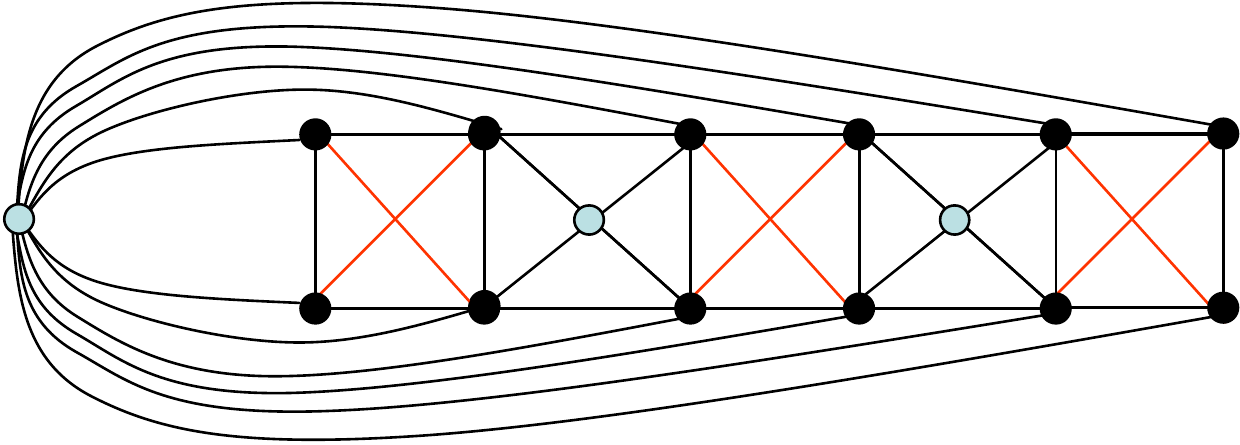}}
     \caption{An IC-planar graph with many embeddings. Each 5-wheel with an open inner vertex
        can be flipped, which implies a
       change of planar and crossing edges in the adjacent kites. Such flips do not change the picture.}
     \label{fig:IC-ambig}
   \end{center}
\end{figure}

 \begin{figure}
    \begin{center}
     \includegraphics[scale=0.4]{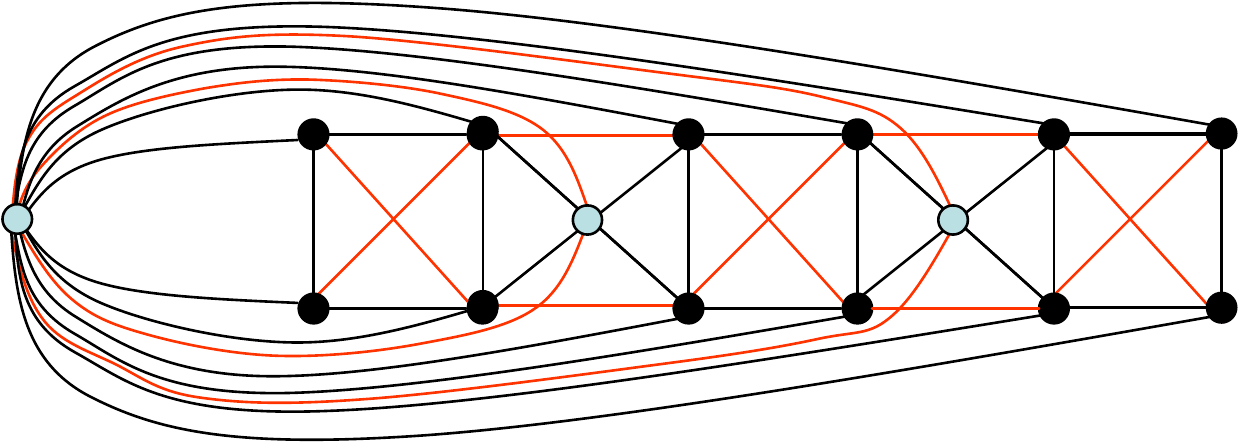}
      \caption{A (maximal) \NIC{}-planar graph $G$ with many embeddings, which are due to the
      separating 3-cycles. Each 5-wheel with an
         open inner vertex can be flipped, which implies a change of planar
         and crossing edges in the adjacent kites. Such flips do not change
         the picture. }
      \label{fig:NIC-ambig}
    \end{center}
 \end{figure}

\subsection{Maximal Graphs}

Obviously, we can test maximality by exhaustive search on graphs
$G+e$, such that $G$ is 1-planar and $G+e$ is not. In the
planar-maximal case, edge $e$ is colored black, and therefore must
be embedded planar.

\begin{theorem}
  For a graph $G$ it takes $\mathcal{O}(n^5)$ time to test whether $G$ is
  planar-maximal and maximal 1-planar, \IC{}-planar, and \NIC{}-planar, respectively.
\end{theorem}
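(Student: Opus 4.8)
The plan is to turn each maximality question into $O(n^2)$ membership queries, each answered in $O(n^3)$ by the recognizers behind Theorems~\ref{thm:ICmain} and~\ref{thm:NICmainresult} together with their extension to separation pairs in Theorem~\ref{thm:2connected}; since there are at most $\binom{n}{2}=O(n^2)$ non-edges, this yields the claimed $O(n^2)\cdot O(n^3)=O(n^5)$ bound. Write $\mathcal{G}$ for the class under consideration, one of 1-planar, \IC{}-planar, and \NIC{}-planar. By definition $G$ is maximal iff $G\in\mathcal{G}$ and $G+e\notin\mathcal{G}$ for every non-edge $e$, and $G$ is planar-maximal iff $G\in\mathcal{G}$ and no non-edge $e$ can be added so that $G+e$ has a drawing in $\mathcal{G}$ in which $e$ is uncrossed. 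So it suffices to perform one cubic-time membership test per candidate edge.

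First I would decide membership of $G$ itself. A maximal or planar-maximal graph is in particular plane-maximal, hence each of its $3$-connected components is triangulated, so Theorem~\ref{thm:2connected} applied with $\mathcal{B}$ (respectively $\mathcal{B}_{IC}$, $\mathcal{B}_{NIC}$) decides in cubic time whether $G$ lies in $\mathcal{G}$ with triangulated components; a negative answer already certifies that $G$ is neither planar-maximal nor maximal, and we reject. Assuming $G$ passes, I iterate over the non-edges $e$. For the planar-maximal test I extend the partial coloring by declaring $e$ \emph{black}: since a black edge must be drawn planar in every embedding the recognizer accepts, a successful run on $G+e$ exhibits an embedding in which $e$ is uncrossed and thus witnesses that $G$ is \emph{not} planar-maximal, whereas failure on every $e$ certifies planar-maximality. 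For the ordinary maximal test I leave $e$ uncolored, so that the recognizer may realize $e$ either as a planar or as a crossing edge, and $G$ is maximal exactly when the run on $G+e$ fails for every $e$. The \IC{}- and \NIC{}-cases are identical once $\mathcal{B}_{IC}$, $\mathcal{B}_{NIC}$ and the satisfiability test of the extension $\eta^+$ are substituted.

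The main obstacle is that the graphs $G+e$ in the inner loop need not be triangulated, while the recognizers decide triangulated membership; I must argue that running them on $G+e$ still answers the intended question. The leverage is that $G$ is already plane-maximal, so its faces are triangles and the admissible augmentations are rigid: a planar edge can only chord a non-triangular face of some alternative embedding, splitting it into triangles, and a crossing edge merely promotes a planar edge to the crossing diagonal of a kite. In either case, if $G+e$ lies in $\mathcal{G}$ at all then it again decomposes into triangulated $3$-connected components, so Theorem~\ref{thm:2connected} applies to it unchanged. Turning this intuition into a proof --- verifying that acceptance of $G+e$ by the decomposed recognizer (with $e$ black, respectively uncolored) coincides with ``$G+e\in\mathcal{G}$ and $e$ uncrossed'', respectively ``$G+e\in\mathcal{G}$'', and that the coloring and the formula $\eta$ can be reset and recomputed $O(n^2)$ times within $O(n^3)$ each --- is where the real work lies.
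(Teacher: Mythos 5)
Your proposal coincides with the paper's own (very terse) proof: the paper establishes this theorem by exactly the exhaustive-search reduction you describe --- run the cubic-time recognizers on the $O(n^2)$ graphs $G+e$, coloring $e$ black in the planar-maximal case so that it is forced to be embedded planar, for a total of $O(n^5)$. If anything you are more careful than the paper, which silently passes over the point you flag at the end, namely that one must justify applying the triangulated-case recognizers (via the decomposition of Theorem~\ref{thm:2connected}) to $G$ and to the augmented graphs $G+e$.
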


There are special linear time algorithm for optimal 1-planar graphs
\cite{b-rso1p-16} and for \NIC{}-planar graphs \cite{bbhnr-NIC-16}
which use the particular structure of optimal graphs. Recall that
there are optimal   \IC{}-planar (\NIC{}-planar) graphs only for
values  $n=4k+4$ ($n=5k+2$) and $k \geq 1$.

\begin{theorem}
  For a graph $G$ it takes $\mathcal{O}(n^3)$ time to test whether $G$ is
  maximum  (optimal)   \IC{}-planar (\NIC{}-planar).
\end{theorem}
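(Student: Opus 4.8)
The plan is to reduce each test to an edge count together with one call of the cubic-time triangulated recognizer. By Theorems \ref{thm:ICmain} and \ref{thm:NICmainresult}, $\mathcal{B}_{IC}$ and $\mathcal{B}_{NIC}$ decide triangulated \IC{}- and \NIC{}-planarity in $O(n^3)$ time, and this call will dominate the running time.

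The first step is the observation that, once \IC{}-planarity (\NIC{}-planarity) is known, the edge number decides maximality and optimality. Let $M_{IC}(n)$ denote the maximum edge number over all \IC{}-planar graphs on $n$ vertices; by the cited density results $M_{IC}(n)=\lfloor 13/4\,n-6\rfloor$ for $n\geq 5$, and the analogous extremal value $M_{NIC}(n)$ equals $\lfloor 18/5\,(n-2)\rfloor$ on the residue classes $n\equiv 2,3 \pmod 5$. Both are computable in $O(1)$. Then $G$ is maximum \IC{}-planar iff $G$ is \IC{}-planar with $|E(G)|=M_{IC}(n)$, and optimal iff in addition $13/4\,n-6$ is an integer, i.e. $n\equiv 0\pmod 4$, and is attained; for the remaining residues no optimal graph exists and the answer is \textsf{false} at once. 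The \NIC{}-planar case is identical with $M_{NIC}(n)$ and the congruence $n\equiv 2\pmod 5$ for optimality.

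The second and crucial step is to show that attaining the extremal edge number forces a triangulated embedding, so that the triangulated recognizers suffice. If some face of an \IC{}-planar (\NIC{}-planar) embedding of $G$ were not a triangle, a chord could be drawn inside it as a planar edge; a planar edge introduces no crossing and hence preserves \IC{}-planarity (\NIC{}-planarity), producing a denser graph on the same vertices and contradicting maximality. Thus a maximum graph admits no planar augmentation, its embedding is plane-maximal, and therefore triangulated; this agrees with the structural form of optimal graphs recorded in \cite{zl-spgic-13} and \cite{bbhnr-NIC-16}. Consequently $\mathcal{B}_{IC}$ (resp. $\mathcal{B}_{NIC}$) applies to, and accepts, every maximum and every optimal graph.

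Putting the pieces together, the procedure computes $n$ and $m=|E(G)|$ in $O(n)$ time---sufficient since \IC{}- and \NIC{}-planar graphs have $O(n)$ edges---rejects unless $m$ equals the relevant extremal value and $n$ lies in the admissible residue class, and otherwise invokes the matching recognizer, accepting iff it succeeds with a satisfiable $\eta^{+}$. Correctness is immediate from the two steps: a graph passing both tests is \IC{}-planar (\NIC{}-planar) with the extremal edge number, hence maximum (optimal), while every maximum (optimal) graph is triangulated and passes both tests. The cost is dominated by the single $O(n^3)$ recognition call, with the finitely many small values of $n$ dispatched by inspection. I expect the triangulation lemma to be the only real obstacle; once it is in place the algorithm is just ``count the edges, then run the triangulated recognizer.''
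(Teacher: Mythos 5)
Your algorithm is exactly the paper's: compare $|E(G)|$ against the extremal value for the class (with the residue-class check for optimality) and make a single call to the cubic-time triangulated recognizer $\mathcal{B}_{IC}$ resp.\ $\mathcal{B}_{NIC}$. The genuine gap is in your ``second and crucial step,'' the triangulation lemma. Your argument---a non-triangular face admits a planar chord, so a maximum graph must be triangulated---is false as stated. A chord drawn inside a non-triangular face may connect two vertices that are already adjacent via an edge drawn elsewhere in the embedding, so adding it creates a multi-edge, not a denser simple graph; moreover, faces of a 1-planar embedding are bounded by half-edges and crossing points, and a large face may offer no pair of distinct non-adjacent real vertices at all. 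The paper's own introduction refutes your inference: there exist maximal 1-planar graphs with only $45/17\,n - 84/17$ edges and maximal \IC{}-planar graphs with $3n-5$ edges. These graphs are well below the triangulated edge count, hence every embedding has non-triangular faces, yet \emph{no} edge can be added---so ``non-triangular face $\Rightarrow$ augmentable'' fails. Your intermediate claim that plane-maximal implies triangulated is also only valid in the absence of separation pairs, as the paper notes in Section~\ref{sect:intro}.

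What actually closes the gap is not a local chord argument but the structural/density results of \cite{zl-spgic-13} and \cite{bbhnr-NIC-16}, which the paper leans on implicitly: its proof simply asserts the characterization ``maximum iff triangulated with $\lfloor 13/4\,n-6\rfloor$ edges'' (and the \NIC{} analogue) and cites those works; attaining the density bound forces the triangulated structure there, rather than maximality forcing augmentability. You gesture at these references but your stated proof is the faulty chord argument, so the lemma you yourself flag as ``the only real obstacle'' remains unproved. One further wrinkle you pass over: for maximum \NIC{}-planarity with $n \not\equiv 2,3 \pmod 5$, the extremal value is $\lfloor 18/5\,(n-2)\rfloor - \epsilon$ with $\epsilon \in \{0,1\}$ not known for all $n$ (the paper says so explicitly), so the ``reject unless $m$ equals the extremal value'' step is not fully specified on those residues; your proposal silently defines $M_{NIC}$ only on the residues $2,3$ and leaves the maximum test incomplete elsewhere, a defect the paper shares but at least acknowledges.
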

\begin{proof}
 A graph $G$ is maximum (or densest) \IC{}-planar if $G$ has  $\lfloor 3.25 \, n -6 \rfloor$
 edges and is triangulated \IC{}-planar.
Similarly, test whether $G$ is triangulated \NIC{}-planar and has
the maximum number of edges of edges of \NIC{}-planar graphs of size
$b$, which is  $d(n) = \lfloor  3.6 \, (n-2)  \rfloor - \epsilon$
where $\epsilon = 0,1$ and $\epsilon = 0$ for $n=5k+i$ and $i=2,3$
and $k\geq 2$. The  value of $epsion$ for all $n$ is not yet known.
\end{proof}

\subsection{\NP-Completeness}

Finally, we improve upon the \NP-hardness proofs of 1-planarity.
We prove that for every instance $\alpha$ of  planar 3-SAT there
is a graph $G_{\alpha}$ such that $G_{\alpha}$ is \IC{} planar if $\alpha$  is satisfiable
and $\alpha$  is satisfiable if $G_{\alpha}$ is 1-planar, even if  $G_{\alpha}$ is 3-connected and
is given with a rotation system. In consequence, we obtain:

\begin{theorem}
  For a graph $G$ it is \NP-complete to test whether $G$ is \IC{}-planar (\NIC{} planar),
  even
  if $G$ is 3-connected and is given with (or without) a rotation system.
\end{theorem}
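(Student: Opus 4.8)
The plan is to establish NP-completeness by a reduction from planar 3-SAT, which is well known to be NP-complete. Membership in NP is immediate: given a graph $G$ together with a rotation system, a certificate is an embedding (equivalently, the placement of each crossing and the assignment of each $K_4$-gadget to a kite or a planar tetrahedron); one can verify in polynomial time that the embedding respects the rotation system, that each edge is crossed at most once, and that the \IC{}-planarity condition (each vertex incident to at most one crossing edge) or the \NIC{}-planarity condition (each pair of crossing edges shares at most one vertex with any other) holds. The substance of the theorem is the hardness direction, and the excerpt already commits to the shape of the argument: for every planar 3-SAT instance $\alpha$ one builds a graph $G_\alpha$, with a prescribed rotation system, such that $\alpha$ is satisfiable if and only if $G_\alpha$ is \IC{}-planar, and moreover one wants the sandwich property that $\alpha$ satisfiable $\Rightarrow G_\alpha$ \IC{}-planar and $G_\alpha$ 1-planar $\Rightarrow \alpha$ satisfiable. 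The sandwich is what simultaneously settles both \IC{}-planarity and the (formally weaker) 1-planarity decision, and it is what lets the same construction serve the \NIC{}-planar case.

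The core of the construction is to design variable gadgets and clause gadgets whose only embedding freedom is a binary choice, and to force that choice to flow through the gadget exactly as truth values flow through the formula. The natural building block is the kite: in an \IC{}-planar drawing a kite consumes two vertices (its crossing endpoints), and no vertex may lie in a second kite, so placing a kite here forbids a kite there. First I would build a variable gadget offering two mutually exclusive crossing configurations, reading them as \textbf{true} and \textbf{false}; I would use a chain or wheel of $K_4$/$K_5$-like subgraphs (the $5$-wheel flip of Fig.~\ref{fig:IC-ambig} is exactly the kind of local two-state flip that propagates a decision) so that a single choice at the variable propagates consistently to all clause gadgets containing that literal. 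The clause gadget for a clause with three literals should be satisfiable in an \IC{}-planar way precisely when at least one of its three incident literal-carrying edges is in the ``satisfying'' crossing state — realized by arranging that the clause vertex can be absorbed into a kite only if one of its three literal connections leaves a vertex free. Because planar 3-SAT comes with a planar incidence graph, the gadgets can be laid out in the plane and wired together without unwanted crossings between gadgets, and this planar layout is exactly what lets me specify a \emph{rotation system} consistent with an intended embedding, so the hardness survives even when the rotation system is given.

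The verification then splits into the two implications of the sandwich. For soundness ($\alpha$ satisfiable $\Rightarrow G_\alpha$ \IC{}-planar) I would take a satisfying assignment, set every variable gadget to the corresponding state, and exhibit the resulting drawing, checking by local inspection of each gadget that it is 1-planar and that no vertex is incident to two crossing edges. For completeness I prove the stronger statement that any \emph{1-planar} embedding of $G_\alpha$ already forces each variable gadget into one of exactly two global states and forces each satisfied clause gadget to admit a consistent local embedding, so extracting the states yields a satisfying assignment; this is where assuming 1-planarity (rather than \IC{}-planarity) in the hypothesis makes the result stronger and also transfers it to 1-planarity recognition. The main obstacle will be the completeness analysis: I must rule out ``cheating'' embeddings in which some edge crossing that I did not intend lets a clause gadget close up without any literal being true, or lets a variable gadget send inconsistent values to different clauses. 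Controlling this requires making the gadgets rigid enough — typically by padding them with enough forced planar triangulated structure (so that $3$-connectivity holds and the gadget boundary is rigid) that the only crossings available are the designed ones; fixing the rotation system removes much of this freedom and is the key technical lever. Finally, to obtain the \NIC{}-planar statement I reuse the same graphs, replacing the per-vertex exclusion argument with the per-edge one: in \NIC{}-planarity each edge lies in at most one kite, so I would re-audit each gadget to confirm the intended two-state behavior persists under the edge-based restriction, adapting the forcing where a shared vertex was previously the exclusion mechanism and a shared edge must now play that role.
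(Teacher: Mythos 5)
Your overall reduction shape matches the paper's: a reduction from planar 3-SAT with the sandwich property ($\alpha$ satisfiable $\Rightarrow G_\alpha$ \IC{}-planar, and $G_\alpha$ 1-planar $\Rightarrow \alpha$ satisfiable), with the rotation system read off a unique intended embedding. But there is a genuine gap: the entire technical content of the theorem --- the gadgets themselves and the completeness analysis ruling out cheating embeddings --- is deferred, and you flag it yourself as ``the main obstacle.'' The paper does not build this machinery from scratch either; it imports the membrane-technique gadgets from the two prior \NP{}-hardness proofs (Auer et al.\ for 3-connected 1-planar graphs with rotation systems, and Brandenburg et al.\ for \IC{}-planarity) and contributes only one new ingredient: the $M^+$-graphs, triangulated modified grid graphs with connection vertices on one side, whose 3-connectivity restores the missing hypothesis and whose rigidity is certified by a self-contained argument --- extend an $M^+$-graph by an apex vertex $z$ joined to all connection vertices and run algorithm $\mathcal{B}$ (or \CGP{}); only the $MC_4$ kite case applies, so the extension, and hence the block, has a unique 1-planar embedding. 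Without either that import or an equivalent worked-out rigidity proof, your proposal asserts the theorem rather than proving it. Worse, the one concrete mechanism you do name --- chains of 5-wheel flips as two-state elements propagating a truth value --- is backwards: in the paper (Figs.~\ref{fig:IC-ambig} and \ref{fig:NIC-ambig}) those flips are the canonical example of embedding ambiguity that does \emph{not} propagate (``such flips do not change the picture''), so a chain of them yields independent local choices rather than a transmitted signal. The transmission device in the actual construction is the membrane: a cycle of rigid blocks that a literal edge is forced to cross on one prescribed side or the other.

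A second, smaller point: your closing plan to ``re-audit each gadget'' for the \NIC{} case under the per-edge exclusion is unnecessary and shows a misplacement of where the exclusion conditions enter. Since every \IC{}-planar graph is \NIC{}-planar and every \NIC{}-planar graph is 1-planar, the sandwich you already stated settles \NIC{}-planarity with zero additional work: $\alpha$ satisfiable gives an \IC{}-planar (hence \NIC{}-planar) drawing of $G_\alpha$, and $G_\alpha$ \NIC{}-planar implies $G_\alpha$ 1-planar, hence $\alpha$ satisfiable. The per-vertex (or per-edge) constraint is used only in the soundness direction, where you exhibit the drawing; the completeness direction runs entirely on 1-planar rigidity, so no vertex-versus-edge adaptation of the forcing is ever needed. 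This is exactly why the paper proves the single sandwich once and obtains both the \IC{} and \NIC{} statements (and, incidentally, re-derives hardness of 1-planarity) from the same family $G_\alpha$.
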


\begin{proof}
  We combine and generalize the \NP-hardness results of Auer et al.
  \cite{abgr-1prs-15} on 3-connected 1-planar graphs and of Brandenburg et al.
  \cite{bdeklm-IC-16}.
  Both approaches reduce from planar 3-SAT \cite{l-pftu-82} and use gadgets with a
  unique embedding and the membrane technique.

  We replace the $U$-graphs from \cite{abgr-1prs-15} by $M^+$-graphs which are
   modified grid graphs with a sequence of free connection vertices
  in the outer face, and are drawn as circles in Fig. \ref{fig:M-graph}. Our $M^+$-graphs
  extend the $M$-graphs of \cite{bdeklm-IC-16} by further edges which yield a triangulation
  except for the side with the connection vertices. Obviously, $M^+$-graphs are \IC{}-planar,
  and even more, $M^+$-graphs have a unique 1-planar embedding if the
  connection vertices are in the outer face. The uniqueness is obtained
  by using Algorithms \CGP{} or  $\mathcal{B}$ on an extension of an $M^+$-graph to a triangulated graph
  $N$ which has a
  new vertex $z$ in the outer face so that $z$ is connected with each connection vertex.
Only   $MC_4$   with a kite applies to $N$. Hence, $N$ has a unique 1-planar embedding.
%
%
%
Now we can use the gadgets from the \NP-hardness proof of
\cite{bdeklm-IC-16} with $M^+$ graphs instead of $M$-graphs.
For every planar 3-SAT instance $\alpha$ construct a graph $G_{\alpha}$.
Then $G_{\alpha}$ is 3-connected IC-planar if the instance of planar 3-SAT is satisfiable.
 The rotation system
  can directly be derived from the embedding, and it is unique.
 Conversely, if $G_{\alpha}$ is 1-planar, then it is \IC{}-planar and $\alpha$ is satisfiable.
 %
\end{proof}

\begin{figure}
   \begin{center}
     \rotatebox{270}{\includegraphics[scale=0.35]{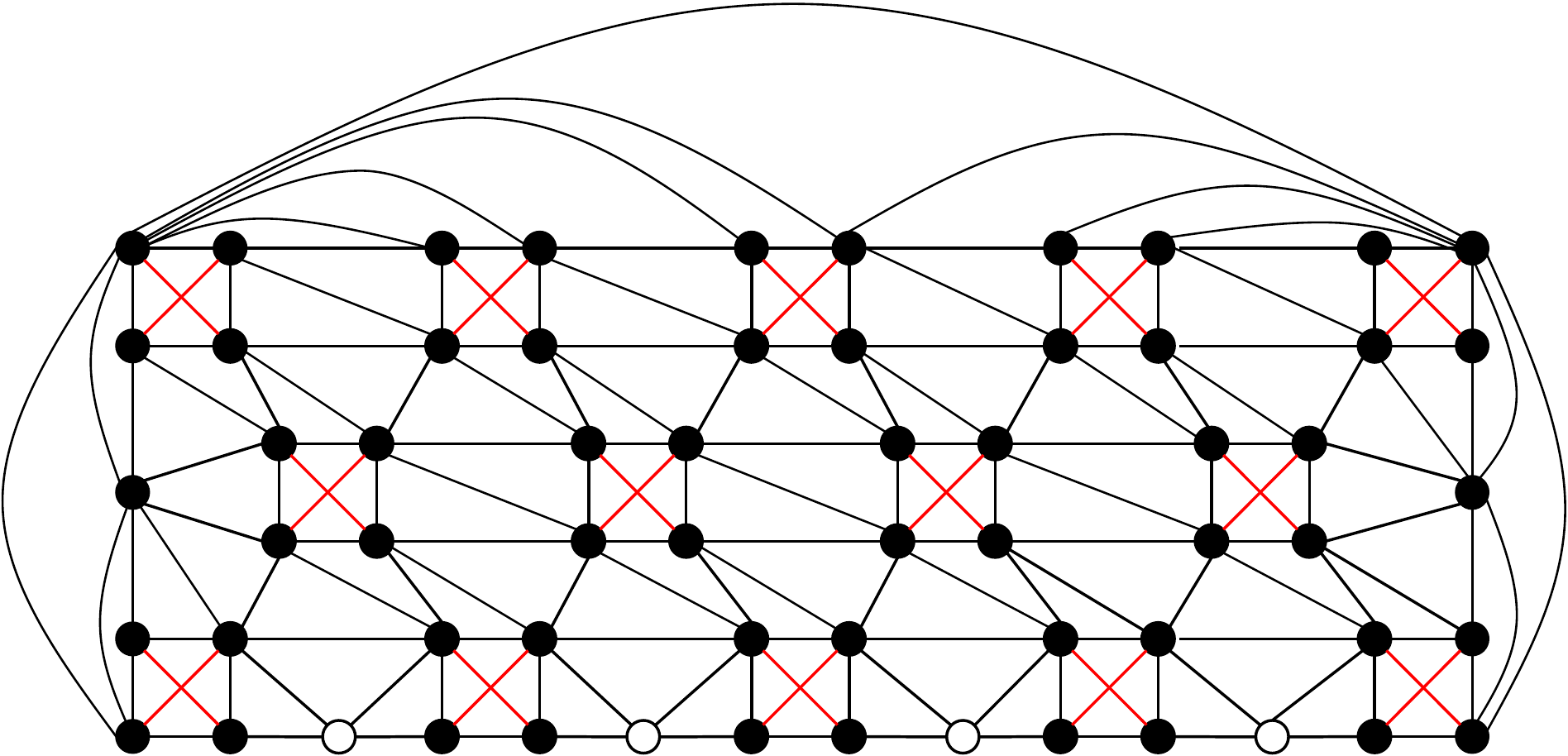}}
     \caption{The structure of an $M^+$ graph.}
     \label{fig:M-graph}
   \end{center}
\end{figure}

\section{Conclusion}
\label{sect:conclusion}

We have shown that triangulated, (planar) maximal, maximum and
optimal \IC{}-planar  and \NIC{}-planar graphs can be recognized in
at most $O(n^5)$ time. Similar bounds are known for 1-planar graphs.
On the other hand, the recognition problem remains \NP-hard if the
graphs are 3-connected. However, the complexity of the recognition
problem for 4- and 5-connected  \IC{}-planar  and \NIC{}-planar
graphs remains open.

\bibliographystyle{splncs03}
\bibliography{brandybibLNCSV4}
\end{document}